 \newtheorem{theorem}{Theorem}[section]
 \newtheorem{lemma}[theorem]{Lemma}
 \newtheorem{corollary}[theorem]{Corollary}
 \newtheorem{definition}[theorem]{Definition}
 \newtheorem{proposition}[theorem]{Proposition}
\newcommand*{\ci}{i}
\newcommand*{\cH}{\mathcal{H}}
\newcommand*{\cI}{\mathcal{I}}
\newcommand*{\cM}{\mathcal{M}}
\newcommand*{\cS}{\mathcal{S}}
\newcommand*{\cY}{\mathcal{Y}}
\newcommand*{\ket}[1]{|#1\rangle}
\newcommand*{\bra}[1]{\langle #1|}
\newcommand*{\proj}[1]{\ket{#1}\bra{#1}}
\newcommand{\braket}[2]{\langle #1|#2\rangle}       
\newcommand{\be}{\begin{equation}}
\newcommand{\ee}{\end{equation}}
\newcommand{\bea}{\begin{eqnarray}}
\newcommand{\eea}{\end{eqnarray}}
\newcommand{\bestar}{\begin{equation*}}
\newcommand{\eestar}{\end{equation*}}
\newcommand{\beastar}{\begin{eqnarray*}}
\newcommand{\eeastar}{\end{eqnarray*}}
\begin{document}

\title{Weak measurement takes a simple form for cumulants}

\author{Graeme \surname{Mitchison}}
\email[]{g.j.mitchison@damtp.cam.ac.uk} \affiliation{Centre for
Quantum Computation, DAMTP,
             University of Cambridge,
             Cambridge CB3 0WA, UK}

\begin{abstract}
A weak measurement on a system is made by coupling a pointer weakly to
the system and then measuring the position of the pointer. If the
initial wavefunction for the pointer is real, the mean displacement of
the pointer is proportional to the so-called weak value of the
observable being measured. This gives an intuitively direct way of
understanding weak measurement. However, if the initial pointer
wavefunction takes complex values, the relationship between pointer
displacement and weak value is not quite so simple, as pointed out
recently by R. Jozsa \cite{J07}. This is even more striking in the
case of sequential weak measurements \cite{MJP07}. These are carried
out by coupling several pointers at different stages of evolution of
the system, and the relationship between the products of the measured
pointer positions and the sequential weak values can become extremely
complicated for an arbitrary initial pointer
wavefunction. Surprisingly, all this complication vanishes when one
calculates the cumulants of pointer positions. These are directly
proportional to the cumulants of sequential weak values. This suggests
that cumulants have a fundamental physical significance for weak
measurement.
\end{abstract}

\pacs{03.67.-a}

\maketitle
\pagestyle{plain}

\section{Introduction}

In physics, formal simplicity is often a reliable guide to the
significance of a result. The concept of weak measurement, due to
Aharonov and his coworkers \cite{AharonovRohrlich05,AAV88}, derives
some of its appeal from the formal simplicity of its basic
formulae. One can extend the basic concept to a sequence of weak
measurements carried out at a succession of points during the
evolution of a system \cite{MJP07}, but then the formula relating
pointer positions to weak values turns out to be not quite so simple,
particularly if one allows arbitrary initial conditions for the
measuring system.  I show here that the complications largely
disappear if one takes the cumulants of expected values of pointer
positions; these are related in a formally satisfying way to weak
values, and this form is preserved under all measurement conditions.

The goal of weak measurement is to obtain information about a quantum
system given both an initial state $\ket{\psi_i}$ and a final,
post-selected state $\ket{\psi_f}$. Since weak measurement causes only
a small disturbance to the system, the measurement result can reflect
both the initial and final states. It can therefore give richer
information than a conventional (strong) measurement, including in
particular the results of all possible strong measurements
\cite{OB05,Consortium99}.  To carry out the measurement, a measuring
device is coupled to the system in such a way that the system is only
slightly perturbed; this can be achieved by having a small coupling
constant $g$. After the interaction, the pointer's position $q$ is
measured (or possibly some other pointer observable; e.g. its momentum
$p$). Suppose that, following the standard von Neumann paradigm,
\cite{vonNeumann55}, the interaction between measuring device and
system is taken to be $H_{int}=g\delta(t)pA$, where $p$ is the
momentum of a pointer and the delta function indicates an impulsive
interaction at time $t$. It can be shown \cite{AAV88} that the
expectation of the pointer position, ignoring terms of order $g^2$ or
higher, is
\begin{align}\label{qclassic}
\langle q \rangle=g Re A_w,
\end{align}
where $A_w$ is the {\em weak value} of the observable $A$ given by
\begin{align}
A_w=\frac{\bra{\psi_f}A\ket{\psi_i}}{\braket{\psi_f}{\psi_i}}.
\end{align}

As can be seen, (\ref{qclassic}) has an appealing simplicity, relating
the pointer shift directly to the weak value. However, this formula
only holds under the rather special assumption that the initial
pointer wavefunction $\phi$ is a gaussian, or, more generally, is real
and has zero mean. When $\phi$ is a completely general wavefunction,
i.e. is allowed to take complex values and have any mean value
\cite{J07,MJP07}, equation (\ref{qclassic}) is replaced by
\begin{align}\label{complex-version}
\langle q \rangle=\langle q \rangle_\ci+gRe A_w+gIm A_w\left(\langle pq+qp \rangle_\ci-2 \langle q \rangle_\ci \langle p \rangle_\ci \right),
\end{align}
where, for any pointer variable $x$, $\langle x \rangle_\ci$ denotes
the initial expected value $\bra{\phi}x\ket{\phi}$ of $x$; so for
instance $\langle q \rangle_\ci$ and $\langle p \rangle_\ci$ are the
means of the initial pointer position and momentum,
respectively. (Again, this formula ignores terms of order $g^2$ or
higher.)

Equation (\ref{complex-version}) seems to have lost the simplicity of
(\ref{qclassic}), but we can rewrite it as
\begin{align}\label{firstxi}
\langle q \rangle =\langle q \rangle_\ci +gRe(\xi A_w),
\end{align}
where 
\begin{align} \label{xi1}
\xi=-2i\left(\langle qp \rangle_\ci-\langle q \rangle_\ci \langle p \rangle_\ci \right),
\end{align}
and equation (\ref{firstxi}) is then closer to the form of
(\ref{qclassic}). As will become clear, this is part of a general
pattern.

One can also weakly measure several observables, $A_1, \ldots, A_n$,
in succession \cite{MJP07}. Here one couples pointers at several
locations and times during the evolution of the system, taking the
coupling constant $g_k$ at site $k$ to be small. One then measures
each pointer, and takes the product of the positions $q_k$ of the
pointers. For two observables, and in the special case where the
initial pointer distributions are real and have zero mean, e.g. a
gaussian, one finds \cite{MJP07}
\begin{align} \label{ABmean} 
\langle q_1q_2 \rangle=\frac{g_1g_2}{2}\ Re \left[
(A_2,A_1)_w+(A_1)_w\overline{(A_2)}_w \right],
\end{align}
ignoring terms in higher powers of $g_1$ and $g_2$.
Here $(A_2,A_1)_w$ is the {\em sequential weak value} defined by
\begin{align} \label{ABweakvalue}
(A_2,A_1)_w=\frac{\bra{\psi_f}WA_2VA_1U\ket{\psi_i}}{\bra{\psi_f}WVU\ket{\psi_i}},
\end{align}
where $U$ is a unitary taking the system from the initial state
$\ket{\psi_i}$ to the first weak measurement, $V$ describes the
evolution between the two measurements, and $W$ takes the system to
the final state. (Note the reverse order of operators in $(A_2,A_1)$,
which reflects the order in which they are applied.) If we drop the
assumption about the special initial form of the pointer distribution
and allow an arbitrary $\phi$, then the counterpart of (\ref{ABmean})
becomes extremely complicated: see Appendix, equation \ref{horrible}.

Even the comparatively simple formula (\ref{ABmean}) is not quite
ideal. By analogy with (\ref{qclassic}) we would hope for a formula of
the form $\langle q_1q_2 \rangle \propto Re (A_2,A_1)_w$, but
there is an extra term $(A_1)_w\overline{(A_2)}_w$. What we seek,
therefore, is a relationship that has some of the formal simplicity of
(\ref{qclassic}) and furthermore preserves its form for all
measurement conditions. It turns out that this is possible if we take
the {\em cumulant} of the expectations of pointer positions. As
we shall see in the next section, this is a certain sum of products of
joint expectations of subsets of the $q_i$, which we denote by
$\langle q_1 \ldots  q_n \rangle^c$. For a set of observables, we can define a
formally equivalent expression using sequential weak values, which we
denote by $(A_n, \ldots , A_1)^c_w$. Then the claim is that, up to
order $n$ in the coupling constants $g_k$ (assumed to be all of the
same approximate order of magnitude):
\begin{align} \label{cumulant-equation}
\langle q_1 \ldots  q_n\rangle^c=g_1 \ldots g_n Re \left\{ \xi (A_n, \ldots
, A_1)_w^c\right\},
\end{align}
where $\xi$ is a factor dependent on the initial wavefunctions for
each pointer. Equation (\ref{cumulant-equation}) holds for any
initial pointer wavefunction, though different wavefunctions produce
different values of $\xi$. The remarkable thing is that all the
complexity is packed into this one number, rather than exploding into
a multiplicity of terms, as in (\ref{horrible}).

Note also that (\ref{firstxi}) has essentially the same form as
(\ref{cumulant-equation}) since, in the case $n=1$,
$\langle A \rangle^c_w=A_w$. However, there is an extra term $\langle q
\rangle_\ci$ in (\ref{firstxi}); this arises because the cumulant
for $n=1$ is anomalous in that its terms do not sum to zero.

\section{Cumulants}

Given a collection of random variables, such as the pointer positions
$q_i$, the cumulant $\langle q_1 \ldots q_n \rangle^c$ is a polynomial
in the expectations of subsets of these variables
\cite{KendallStuart77,Royer83}; it has the property that it vanishes
whenever the set of variables $q_i$ can be divided into two
independent subsets. One can say that the cumulant, in a certain
sense, picks out the maximal correlation involving all of the
variables.

We introduce some notation to define the cumulant. Let $x$ be a subset
of the integers $\{1, \ldots, n\}$. We write $\prod_x q$ for
$\prod_{i=1}^{|x|} q_{x(i)}$, where $|x|$ is the size of $x$ and the
indices of the $q$'s in the product run over all the integers $x(i)$
in $x$. Then the cumulant is given by
\begin{align}\label{cumulant}
\langle q_1 \ldots q_n \rangle^c=\sum_{b=\{b_1,\ldots,
b_k\}}a_k\prod_{j=1}^k \left\langle \prod_{b_j}q \right\rangle,
\end{align}
where $b=\{b_1,\ldots, b_k\}$ runs over all partitions of the integers
$\{1, \ldots, n\}$ and the coefficient $a_k$ is given by
\begin{align}\label{coefficients}
a_k=(k-1)!(-1)^{k-1}.
\end{align}

For $n=1$ we have $\langle q \rangle^c=\langle q \rangle$, and for
$n=2$
\begin{align}\label{q2}
\langle q_1 q_2 \rangle^c=\langle q_1q_2 \rangle-\langle q_1 \rangle \langle q_2 \rangle.
\end{align}

There is an inverse operation for the cumulant
\cite{ZZXY06,Royer83}:
\begin{proposition}\label{anti}
\begin{align}\label{anticumulant}
\langle q_1 \ldots q_n \rangle=\sum_{b=\{b_1,\ldots,
b_k\}}\prod_{j=1}^k \left\langle \prod_{b_j}q \right\rangle^c.
\end{align}
\end{proposition}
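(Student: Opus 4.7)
The plan is to derive (\ref{anticumulant}) from the relation between the multivariate moment and cumulant generating functions. Introduce formal indeterminates $t_1,\ldots,t_n$ and set $M(t)=\langle\exp(\sum_i t_i q_i)\rangle$ and $K(t)=\log M(t)$. By expanding the exponential, the coefficient of the multilinear monomial $t_1 t_2\cdots t_n$ in $M(t)$ is exactly $\langle q_1\cdots q_n\rangle$. My first step is to check that the definition (\ref{cumulant})--(\ref{coefficients}) is equivalent to the statement that the coefficient of $t_1\cdots t_n$ in $K(t)$ equals $\langle q_1\cdots q_n\rangle^c$. This falls out of writing $K=\sum_{k\ge 1}\frac{(-1)^{k-1}}{k}(M-1)^k$ and extracting multilinear terms: each of the $k$ factors of $M-1$ must contribute a monomial indexed by a nonempty subset of $\{1,\ldots,n\}$, and to produce $t_1\cdots t_n$ those $k$ subsets must be disjoint with union $\{1,\ldots,n\}$. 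Each unordered $k$-block partition corresponds to $k!$ ordered tuples, and combining $k!/k=(k-1)!$ with the sign $(-1)^{k-1}$ reproduces the coefficient $a_k$ of (\ref{coefficients}).

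Given this identification, the proposition is immediate from $M=\exp K$. Writing $M=\sum_{k\ge 0}K^k/k!$ and again collecting multilinear contributions, the coefficient of $t_1\cdots t_n$ in $K^k$ is $k!$ times the sum over unordered $k$-block partitions $b$ of $\prod_j\langle\prod_{b_j}q\rangle^c$; the $k!$ cancels the $1/k!$, and summing over $k$ yields exactly the right-hand side of (\ref{anticumulant}).

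As a sanity check, and as an alternative proof route, one can argue by induction on $n$: isolate $\langle q_1\cdots q_n\rangle$ on the left of (\ref{cumulant}) (possible because the $k=1$ term of the sum has coefficient $a_1=1$), apply the inductive hypothesis to each strictly lower-order joint moment appearing on the right, and swap the resulting double sum indexed by a coarser partition together with a refinement of each of its blocks. The residual combinatorial identity that must be checked is
\begin{align*}
\sum_{k=1}^{m}(k-1)!(-1)^{k-1}S(m,k)=0\qquad(m\ge 2),
\end{align*}
where $S(m,k)$ is a Stirling number of the second kind; this is the standard M\"obius-function identity on the partition lattice. The main obstacle in either route is simply the careful bookkeeping of ordered versus unordered partitions, and the matching of $a_k$ with $(-1)^{k-1}/k$ arising from the logarithm; once this is done, the result reduces to the tautology that $\exp$ and $\log$ are formal inverses.
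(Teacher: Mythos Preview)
Your primary proof via the multivariate generating functions is correct and takes a genuinely different route from the paper's. The paper proceeds by direct substitution: it expands each cumulant on the right-hand side of (\ref{anticumulant}) using the definition (\ref{cumulant}), collects the resulting products of moments $\prod_j\langle\prod_{b_j}q\rangle$ indexed by partitions $b$, and must then show that the coefficient of each such product vanishes unless $b$ is the one-block partition. After replacing each block of $b$ by an integer, this becomes the identity $\sum a_{k_1}\cdots a_{k_r}=0$ summed over partitions of $\{1,\ldots,k\}$, which the paper verifies by recognizing it as $k!$ times the coefficient of $x^k$ in $e^{x-x^2/2+x^3/3-\cdots}=e^{\log(1+x)}=1+x$. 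Your approach instead identifies (\ref{cumulant}) with the multilinear coefficient in $K=\log M$ and then reads (\ref{anticumulant}) off from $M=e^K$; this is the standard probabilistic derivation and makes the $\exp/\log$ duality explicit, whereas the paper keeps it hidden inside a one-variable computation. Your method is cleaner and more conceptual; the paper's is more self-contained in that it never introduces the multivariate moment generating function and works entirely with partition sums. Your alternative inductive route is in fact closer in spirit to the paper's argument: the Stirling-number identity you quote and the paper's $e^{\log(1+x)}$ computation are two faces of the same M\"obius inversion on the partition lattice.
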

\begin{proof}
To see that this equation holds, we must show that the term
$\prod_{j=1}^k\langle \prod_{b_j} q \rangle$ obtained by expanding the
right-hand side is zero unless $b$ is the partition consisting of the
single set $\{1, \ldots, n\}$. Replacing each subset $b_j$ by the
integer $j$, this is equivalent to $\sum a_{k_1} \ldots a_{k_r}=0$,
where the sum is over all partitions of $\{1, \ldots, k\}$ by subsets
of sizes $k_1, \ldots, k_r$ and the $a_{k}$'s are given by
(\ref{coefficients}). In this sum we distinguish partitions with
distinct integers; e.g. $\{1,2\},\{3,4\}$ and $\{1,3\},\{2,4\}$. There
are $\binom{k}{k_1 \ldots k_r}(l_1!  \ldots l_k!)^{-1}$ such distinct
partitions with subset sizes $k_1 \ldots k_r$, where $l_i$ is the
number of $k$'s equal to $i$, so our sum may be rewritten as $k! \sum
(-1)^{k_1-1} \ldots (-1)^{k_r-1} (l_1!  \ldots l_k! k_1 \ldots
k_r)^{-1}$, where the sum is now over partitions in the standard sense
\cite{Apostol76}. This is $k!$ times the coefficient of $x^k$ in
\begin{align}
&\left(1+x+\frac{x^2}{2!}+\ldots \right)\left(1+(-x^2/2)+\frac{(-x^2/2)^2}{2!}+\ldots \right)\left(1+(x^3/3)+\frac{(x^3/3)^2}{2!}+\ldots \right)\ldots\\
&=e^{x-x^2/2+x^3/3 \ldots}= e^{log_e(1+x)}=1+x.
\end{align}
Thus the sum is zero except for $k=1$, which corresponds to the
single-set partition $b$.
\end{proof}

\begin{definition}
If $\{1, \ldots, n\}$ can be written as the disjoint union of two
subsets $S_1$ and $S_2$, we say the variables corresponding to these
subsets are independent if 
\begin{align}\label{indep}
\langle \prod_{S_1^\prime}q \prod_{S_2^\prime} q\rangle=\langle \prod_{S_1^\prime} q\rangle \langle \prod_{S_2^\prime} q\rangle,
\end{align}
for any subsets $S_i^\prime \subseteq S_i$. 
\end{definition}
We now prove the characteristic property of cumulants:
\begin{proposition}\label{indep-lemma}
The cumulant vanishes if its arguments can be divided into two
independent subsets.
\end{proposition}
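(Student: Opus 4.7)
The plan is to proceed by induction on $n$, using Proposition \ref{anti} to expand the joint expectation $\langle q_1 \cdots q_n \rangle$ as a sum of products of cumulants, and then to show that under the independence hypothesis only two kinds of partition survive: the single-block partition (which produces the desired cumulant) and those whose blocks all lie entirely inside $S_1$ or entirely inside $S_2$ (which reassemble, via Proposition \ref{anti} applied separately to $S_1$ and $S_2$, into $\langle \prod_{S_1} q\rangle \langle \prod_{S_2} q\rangle$). The base case $n=2$ is immediate from (\ref{q2}) and (\ref{indep}).

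For the inductive step, assume the proposition for cumulants of fewer than $n$ arguments, and suppose $\{1,\ldots,n\} = S_1 \sqcup S_2$ with $S_1$ and $S_2$ independent. Applying Proposition \ref{anti}, $\langle q_1 \cdots q_n \rangle$ equals $\langle q_1 \cdots q_n \rangle^c$ (the contribution of the single-block partition) plus a sum over all other partitions $b=\{b_1,\ldots,b_k\}$ of the products $\prod_{j} \langle \prod_{b_j} q \rangle^c$. Call a block $b_j$ \emph{mixed} if it meets both $S_1$ and $S_2$. For any mixed block occurring in a non-trivial partition one has $|b_j|<n$ and $b_j = (b_j \cap S_1) \sqcup (b_j \cap S_2)$; the independence condition (\ref{indep}) applied to the subsets $S_i' = b_j \cap S_i$ says exactly that these two pieces are independent, so the inductive hypothesis forces $\langle \prod_{b_j} q \rangle^c = 0$. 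Consequently every partition containing a mixed block drops out.

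The surviving partitions are precisely those whose blocks each lie entirely within $S_1$ or entirely within $S_2$, and these are in bijection with pairs $(b',b'')$ of partitions of $S_1$ and $S_2$. Under this bijection the remaining sum factors into the product of the two corresponding sub-sums, which by Proposition \ref{anti} applied to $S_1$ and to $S_2$ equals $\langle \prod_{S_1} q \rangle \langle \prod_{S_2} q \rangle$. By (\ref{indep}) this in turn equals $\langle q_1 \cdots q_n \rangle$, and substituting back yields $\langle q_1 \cdots q_n \rangle^c = 0$. I expect the main obstacle to be bookkeeping: verifying carefully that the independence condition on $S_1,S_2$ really does transfer to each mixed block so that the inductive hypothesis applies there, and checking that the pure-block partitions biject correctly with pairs of sub-partitions so that the sum factors in the claimed way.
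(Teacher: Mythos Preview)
Your proof is correct and follows essentially the same route as the paper: induction on $n$, using Proposition \ref{anti} to expand $\langle q_1\cdots q_n\rangle$ into cumulants, killing all mixed-block terms via the inductive hypothesis, reassembling the pure-block terms into $\langle \prod_{S_1} q\rangle\langle \prod_{S_2} q\rangle$, and concluding by independence. Your write-up is in fact more explicit than the paper's on the two bookkeeping points you flag (that independence restricts to each mixed block, and that the surviving partitions biject with pairs of sub-partitions), but the argument is the same.
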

\begin{proof}
For $n=2$ this follows at once from (\ref{q2}) and (\ref{indep}), and
we continue by induction. From (\ref{anticumulant}) and the inductive
assumption for $n-1$, we have
\begin{align}
\langle q_1 \ldots q_n \rangle=\langle q_1 \ldots q_n \rangle^c+\sum_{b=\{b_1,\ldots,
b_k\} \subset S_1}\prod_{j=1}^k \left\langle \prod_{b_j}q \right\rangle^c  \sum_{c=\{c_1,\ldots,
c_l\} \subset S_2}\prod_{j=1}^l \left\langle \prod_{c_j}q \right\rangle^c.
\end{align}
This holds because any term on the right-hand side of
(\ref{anticumulant}) vanishes when any subset of the partition $b$
includes elements of both $S_1$ and $S_2$. Using (\ref{anticumulant})
again, this implies
\begin{align}
\langle q_1 \ldots q_n \rangle=\langle q_1 \ldots q_n
\rangle^c+\langle \prod_{S_1} q\rangle \langle \prod_{S_2} q\rangle,
\end{align}
and by independence, $\langle q_1 \ldots q_n \rangle^c=0$. Thus the
inductive assumption holds for $n$.
\end{proof}
In fact, the coefficients $a_k$ in (\ref{cumulant}) are uniquely
determined to have the form (\ref{coefficients}) by the requirement
that the cumulant vanishes when the variables form two independent
subsets \cite{Percus75,Simon79}.

For $n=2$, the cumulant (\ref{q2}) is just the covariance,
$\langle q_1 q_2 \rangle^c=\langle (q_1-\langle q_1
\rangle)(q_2-\langle q_2 \rangle) \rangle$, and the same is true for
$n=3$, namely $\langle q_1 q_2q_3 \rangle^c=\langle (q_1-\langle q_1
\rangle)(q_2 -\langle q_2 \rangle)(q_3 -\langle q_3 \rangle)
\rangle$. For $n=4$, however, there is a surprise. The covariance is
given by
\begin{align} \label{4covariance}
\langle \prod_{i=1}^4 (q_i-\langle q_i \rangle) \rangle=\langle q_1q_2q_3q_4 \rangle -\sum \langle q_iq_jq_k \rangle\langle q_l \rangle+\sum \langle q_iq_j \rangle\langle q_k \rangle\langle q_l \rangle-3\langle q_1\rangle\langle q_2\rangle\langle q_3 \rangle\langle q_4 \rangle,
\end{align}
where the sums include all distinct combinations of indices, but the
cumulant is
\begin{align} \label{4cumulant}
\langle q_1q_2q_3q_4\rangle^c=\langle q_1q_2q_3q_4 \rangle -\sum \langle q_iq_jq_k \rangle\langle q_l \rangle-\sum \langle q_iq_j \rangle\langle q_kq_l \rangle+2\sum \langle q_iq_j \rangle\langle q_k \rangle\langle q_l \rangle-6\langle q_1\rangle\langle q_2\rangle\langle q_3 \rangle\langle q_4 \rangle,
\end{align}
which includes terms like $\langle q_1q_2 \rangle\langle q_3q_4
\rangle$ that do not occur in the covariance. Note that, if the
subsets $\{1,2\}$ and $\{3,4\}$ are independent, the covariance does
not vanish, since independence implies we can write the first term in
(\ref{4covariance}) as $\langle q_1q_2q_3q_4 \rangle=\langle q_1q_2
\rangle \langle q_3q_4 \rangle$ and there is no cancelling
term. However, as we have seen, the cumulant does contain such a term,
and it is a pleasant exercise to check that the whole cumulant
vanishes.

\section{Sequential weak values and cumulants}

To carry out a sequential weak measurement, one starts a system in an
initial state $\ket{\psi_i}$, then weakly couples pointers at several
times $t_k$ during the evolution of the system, and finally
post-selects the system state $\ket{\psi_f}$. One then measures the
pointers and finally takes the product of the values obtained from
these pointer measurements. It is assumed that one can repeat the
whole process many times to obtain the expectation of the product of
pointer values. If one measures pointer positions $q_k$, for instance,
one can estimate $\langle q_1 \ldots q_n \rangle$, but one could also
measure the momenta of the pointers to estimate $\langle p_1 \ldots
p_n \rangle$.

If the coupling for the $k$th pointer is given by
$H_{int}=\delta(t-t_k)r_kp$, and if the individual initial pointer
wavefunctions are gaussian, or, more generally, are real with zero
mean, then it turns out \cite{MJP07} that these expectations can be
expressed in terms of sequential weak values of order $n$ or less.
Here the sequential weak value of order $n$, $(A_n, \ldots
A_1)_w$, is defined by
\begin{align} \label{general-weakvalue}
(A_n, \ldots A_1)_w=\frac{\bra{\psi_f}U_{n+1}A_nU_n \ldots
A_1U_1\ket{\psi_i}}{\bra{\psi_f}U_{n+1} \ldots U_1\ket{\psi_i}},
\end{align}
where $U_i$ defines the evolution of the system between the
measurements of $A_{i-1}$ and $A_i$.

When the $A_k$ are projectors, $A_k=\proj{x_k}$, we can write the sequential
weak value as \cite{MJP07}
\begin{align} \label{amplitude}
(A_n, \ldots A_1)_w=\frac{\bra{\psi_f}U_{n+1}\ket{x_n}\ \bra{x_n}U_n\ket{x_{n-1}} \ldots
\bra{x_1}U_1\ket{\psi_i}}{\sum_y \bra{\psi_f}U_{n+1}\ket{y_n}\ \bra{y_n}U_n\ket{y_{n-1}} \ldots
\bra{y_1}U_1\ket{\psi_i}}=\frac{\mbox{amplitude}(x)}{\sum_y \mbox{amplitude}(y)},
\end{align}
which shows that, in this case, the weak values has a natural
interpretation as the amplitude for following the path defined by
the $x_k$. Figure \ref{cumulant} shows an example taken from \cite{MJP07}
where the path (labelled by '1' and '2' successively) is a route taken
by a photon through a pair of interferometers, starting by injecting
the photon at the top left (with state $\ket{\psi}_i$) and ending with
post-selection by detection at the bottom right (with final state
$\ket{\psi}_f$).

\begin{figure}[hbtp]
\centerline{\epsfig{file=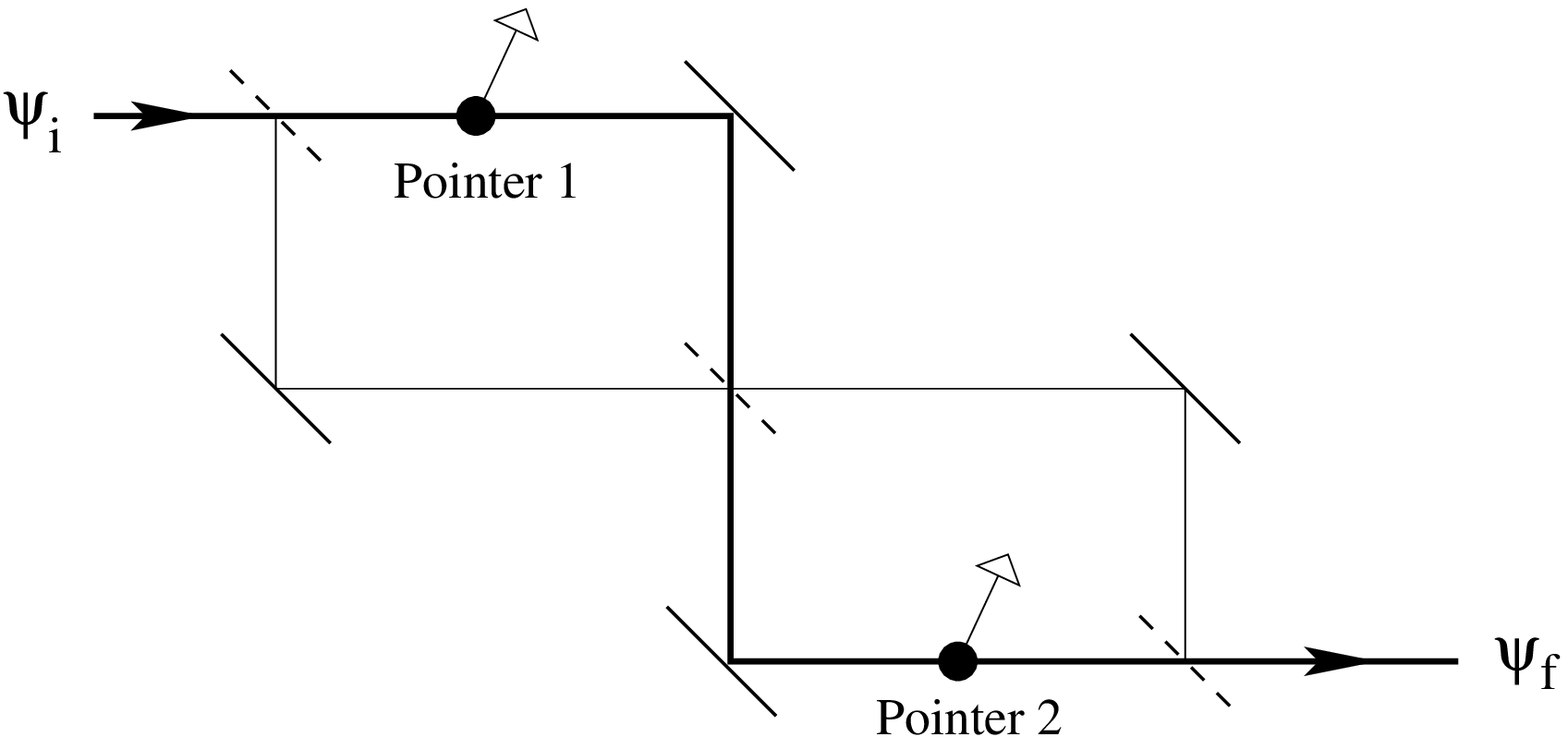,width=0.5\textwidth}}
\caption{\label{cumulantfig}}
\end{figure}

In the last section, the cumulant was defined for expectations of
products of variables. One can define the cumulant for other entities
by formal analogy; for instance for density matrices \cite{ZZXY06}, or
hypergraphs \cite{Royer83}. We can do the same for sequential weak
values, defining the cumulant by (\ref{cumulant}) with $\langle
\prod_{b_j}q\rangle$ replaced by $(\overleftarrow{A_{b_j(|b_j|)},
\ldots A_{b_j(1)}})_w$, where the arrow indicates that the indices,
which run over the subset $b_j$, are arranged in ascending order from
right to left. For example, for $n=1$, $(A_w)^c=A_w$, and for $n=4$
\begin{align}\label{4weak}
(A_4,A_3,A_2,A_1)^c_w&=(A_4,A_3,A_2,A_1)_w-\sum (\overleftarrow{A_i,A_j,A_k})_w(A_l)_w-\sum (\overleftarrow{A_i,A_j})_w(\overleftarrow{A_k,A_l})_w\\
\nonumber &+2\sum (\overleftarrow{A_i,A_j})_w(A_k)_w(A_l)_w-6(A_1)_w(A_2)_w(A_3)_w(A_4)_w.
\end{align}
There is a notion of independence that parallels (\ref{indep}): given
a disjoint partition $S_1 \cup S_2=\{1, \ldots, n\}$ such that
\begin{align}
(\overleftarrow{A_{S_1^\prime \cup S_2^\prime}})_w=(\overleftarrow{A_{S_1^\prime}})_w(\overleftarrow{A_{S_2^\prime}})_w,
\end{align}
for any subsets $S_i^\prime \subseteq S_i$, then we say the
observables labelled by the two subsets are {\em weakly independent}. There is then an analogue of Lemma \ref{indep-lemma}:
\begin{lemma}
The cumulant $(A_n, \ldots, A_1)_w^c$ vanishes if the $A_k$ are weakly
independent for some subsets $S_1$, $S_2$.
\end{lemma}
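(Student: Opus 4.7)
The plan is to transplant the proof of Proposition \ref{indep-lemma} essentially verbatim into the weak-value setting; the key point to notice first is that Proposition \ref{anti} is itself purely formal. Inspecting its proof, one sees that it never uses any interpretive property of $\langle\cdot\rangle$: it reduces to a single combinatorial identity among the coefficients $a_k=(k-1)!(-1)^{k-1}$, established by the $e^{x-x^2/2+x^3/3-\cdots}=1+x$ computation. Since the sequential weak-value cumulant is defined by (\ref{cumulant}) with $\langle\prod_{b_j}q\rangle$ replaced by $(\overleftarrow{A_{b_j}})_w$, the same inversion holds:
\begin{align*}
(\overleftarrow{A_{\{1,\ldots,n\}}})_w \;=\; \sum_{b=\{b_1,\ldots,b_k\}} \prod_{j=1}^k (\overleftarrow{A_{b_j}})_w^c.
\end{align*}

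With this tool in hand I would induct on $n$. The base case $n=2$ is immediate: $(A_2,A_1)_w^c=(A_2,A_1)_w-(A_1)_w(A_2)_w$, and this is exactly the weak-independence condition for $\{1\},\{2\}$. For the inductive step, fix a split $\{1,\ldots,n\}=S_1\cup S_2$ across which the $A_k$ are weakly independent, and apply the weak-value inversion formula. Partitions $b$ of $\{1,\ldots,n\}$ fall into two classes: those \emph{refining} $(S_1,S_2)$ (every block lies entirely in some $S_i$) and those \emph{straddling} it (some block meets both). Any straddling $b$ other than the trivial one-block partition has its straddling block $b_j$ of size $|b_j|<n$; the definition of weak independence is inherited by the subsets $b_j\cap S_1\subseteq S_1$ and $b_j\cap S_2\subseteq S_2$, so the inductive hypothesis applied within $b_j$ gives $(\overleftarrow{A_{b_j}})_w^c=0$, killing the whole product.

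The only surviving straddler is therefore the trivial partition, contributing $(A_n,\ldots,A_1)_w^c$. The refining partitions are in bijection with pairs of partitions of $S_1$ and $S_2$, and applying the weak-value inversion formula separately on $S_1$ and on $S_2$ identifies their total contribution with $(\overleftarrow{A_{S_1}})_w(\overleftarrow{A_{S_2}})_w$. Thus the inversion formula collapses to
\begin{align*}
(A_n,\ldots,A_1)_w \;=\; (A_n,\ldots,A_1)_w^c + (\overleftarrow{A_{S_1}})_w(\overleftarrow{A_{S_2}})_w,
\end{align*}
and weak independence of $S_1,S_2$ forces $(A_n,\ldots,A_1)_w^c=0$, completing the induction. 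The main obstacle is really just the initial formality claim — that Proposition \ref{anti} transfers without modification to any bracket obeying the cumulant formula — but once that is noted the rest of the argument is a near-literal copy of the proof of Proposition \ref{indep-lemma}.
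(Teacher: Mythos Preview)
Your proposal is correct and matches the paper's approach: the paper gives no explicit proof of this Lemma, merely presenting it as the weak-value analogue of Proposition~\ref{indep-lemma}, and you have faithfully spelled out that analogy by first observing that Proposition~\ref{anti} is a purely combinatorial identity in the coefficients $a_k$ and hence transfers verbatim to the weak-value bracket, then rerunning the induction of Proposition~\ref{indep-lemma}. The only thing worth flagging is that your base case as written covers only the split $\{1\},\{2\}$, whereas the induction also needs the (trivially true) case where one of $S_1,S_2$ is empty for smaller blocks; but since the definition of weak independence implicitly assumes both subsets nonempty and any straddling block has both intersections nonempty, this is harmless.
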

As an example of this, if one is given a bipartite system $\cH^A
\otimes \cH^B$, and initial and final states that factorise as
$\ket{\psi_i}=\ket{\psi_i}^A \otimes \ket{\psi_i}^B$ and
$\ket{\psi_f}=\ket{\psi_f}^A \otimes \ket{\psi_f}^B$, then observables
on the $A$- and $B$-parts of the system are clearly weakly
independent.
Another class of examples comes from what one might describe as a
``bottleneck'' construction, where, at some point the evolution of the
system is divided into two parts by a one-dimensional projector (the
bottleneck) and its complement, and the post-selection excludes the
complementary part. Then, if all the measurements before the projector
belong to $S_1$ and all those after the projector belong to $S_2$, the
two sets are weakly independent. This follows because we can write
\begin{align*}
(\overleftarrow{A_{S_1^\prime \cup S_2^\prime}})_w 
&=\frac{\bra{\psi_f}U_{n+1}A_n \ldots U_{k+1}A_kW_k\proj{\psi_b}V_kA_{k-1}  \ldots A_1U_1\ket{\psi_i}}
{\bra{\psi_f}U_{n+1} \ldots U_{k+1}W_k\proj{\psi_b}V_k  \ldots U_1\ket{\psi_i}} \\
&=\frac{\bra{\psi_f}U_{n+1}A_n \ldots U_{k+1}A_kW_k\proj{\psi_b}V_k  \ldots U_1\ket{\psi_i}}
{\bra{\psi_f}U_{n+1} \ldots U_{k+1}W_k\proj{\psi_b}V_k  \ldots U_1\ket{\psi_i}} \ \ \frac{\bra{\psi_f}U_{n+1} \ldots U_{k+1}W_k\proj{\psi_b}V_kA_{k-1} \ldots A_1U_1\ket{\psi_i}}
{\bra{\psi_f}U_{n+1} \ldots U_{k+1}W_k\proj{\psi_b}V_k  \ldots U_1\ket{\psi_i}}\\
&=(\overleftarrow{A_{S_1^\prime}})_w(\overleftarrow{A_{S_2^\prime}})_w,
\end{align*}
where $W_k\proj{\psi_b}V_k$ is the part of $U_k$ lying in the
post-selected subspace. As an illustration of this, suppose we add a
connecting link (Figure \ref{bottleneckfig}, ``$L$'') between the two
interferometers in Figure \ref{cumulantfig}, so $\proj{\psi_b}$, the
bottleneck, is the projection onto $L$, and post-selection discards
the part of the wavefunction corresponding to the path
$L^\prime$. Then measurements at `1' and `2' are weakly independent;
in fact $(A_1)_w=1/2$, $(A_2)_w=1/2$ and $(A_2,A_1)_w=1/4$. Note that
the same measurements are {\em not} independent in the double
interferometer of Figure \ref{cumulantfig}, where $(A_1)_w=0$,
$(A_2)_w=0$, and yet, surprisingly, $(A_2,A_1)_w=-1/2$, \cite{MJP07}.

\begin{figure}[hbtp]
\centerline{\epsfig{file=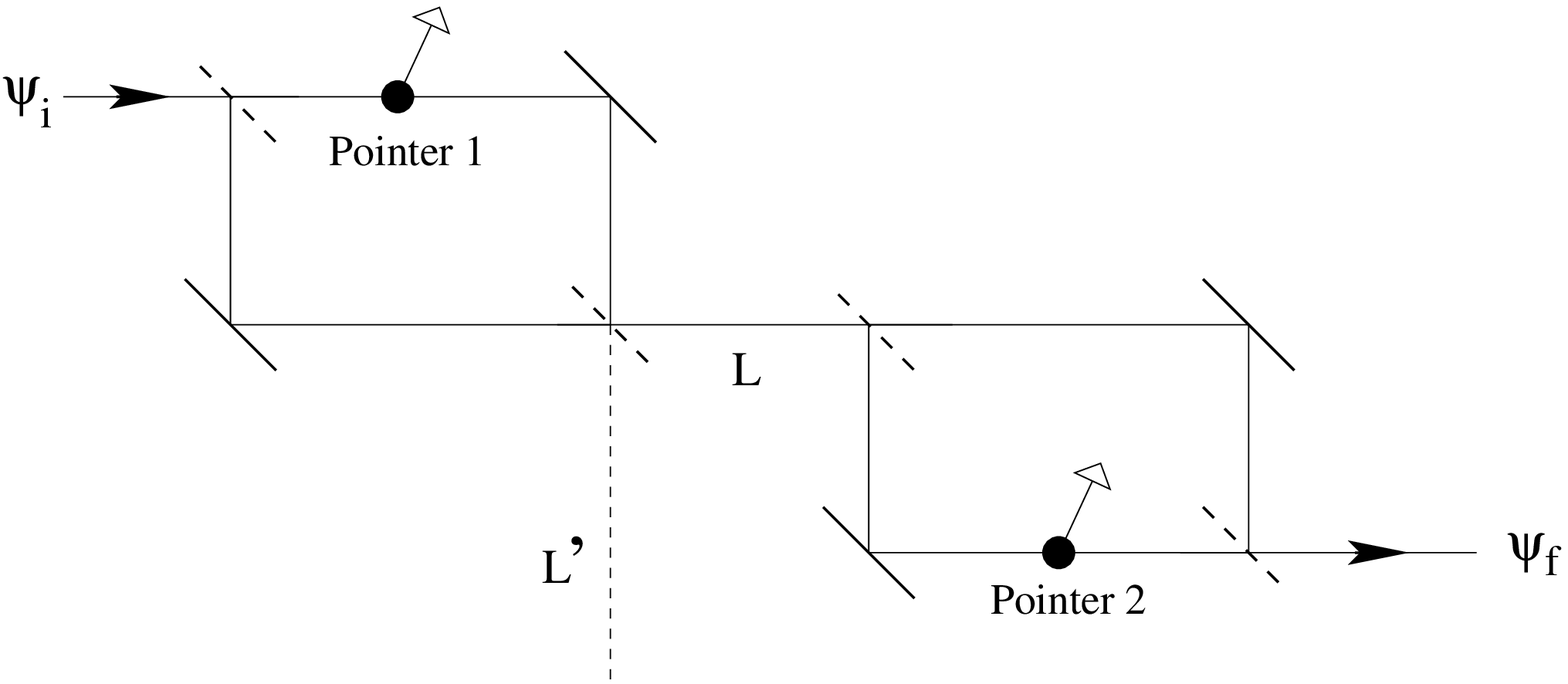,width=0.6\textwidth}}
\caption{\label{bottleneckfig}}
\end{figure}

\section{The main theorem}\label{theorem-section}
Consider $n$ system observables $A_1, \ldots, A_n$. Suppose $s_k$, for
$k=1, \dots, n$, are observables of the $k$th pointer, namely
Hermitian functions $s_k(q_k,p_k)$ of pointer position $q_k$ and
momentum $p_k$, and the interaction Hamiltonian for the weak
measurement of system observable $A_k$ is $H_k=g_ks_kA_k$, where $g_k$
is a small coupling constant (all $g_k$ being assumed of the same
order of magnitude $g$). Suppose further that the pointer observables
$r_k$ are measured after the coupling. Let $\phi_k$ be the $k$-th
pointer's initial wave-function. For any variable $x_k$ associated to
the $k$-th pointer, write $\langle x_k \rangle_\ci$ for
$\bra{\phi_k}x_k \ket{\phi_k}$.

We are now almost ready to state the main theorem, but first need to
clarify the measurement procedure. When we evaluate expectations of
products of the $r_k$ for different sets of pointers, for instance
when we evaluate $\langle r_1 r_2\rangle$, we have a choice. We could
either couple the entire set of $n$ pointers and then select the data
for pointers 1 and 2 to get $\langle r_1 r_2\rangle$. Or we could
carry out an experiment in which we couple just pointers 1 and 2 to
give $\langle r_1 r_2\rangle$. These procedures give different
answers. For instance, if we couple three pointers and measure
pointers 1 and 2 to get $\langle r_1 r_2\rangle$, in addition to the
terms in $g_1$, $g_2$ and $g_1g_2$ we also get terms in $g_2g_3$ and
$g_1g_3$ involving the observable $A_3$. This means we get a different
cumulant $\langle r_1 \ldots r_n \rangle^c$, depending on the
procedure used. In what follows, we regard each expectation as being
evaluated in a separate experiment, with only the relevant pointers
coupled. It will be shown elsewhere that, with the alternative
definition, the theorem still holds but with a different value of the
constant $\xi$.

\begin{theorem}[Cumulant theorem]\label{main-theorem}
For $n \ge 2$, for any pointer observables $r_k$ and $s_k$, and for any
initial pointer wavefunctions $\phi_k$, up to total order $n$ in the $g_k$,
\begin{align}\label{main-result}
\langle r_1 \ldots r_n \rangle^c= g_1 \ldots g_n Re \left\{ \xi
(A_n, \ldots, A_1)^c_w\right\},
\end{align}
where $\xi$ (sometimes written more explicitly as $\xi_{r_1 \ldots
r_n}$) is given by
\begin{align}
\label{xi} \xi=2(-i)^n \left( \prod_{k=1}^n \langle r_ks_k \rangle_\ci-\prod_{k=1}^n \langle r_k \rangle_\ci \langle s_k \rangle_\ci \right).
\end{align}
For $n=1$ the same result holds, but with the extra term $\langle r
\rangle_i$:
\begin{align} \label{main-result1}
\langle r \rangle =\langle r \rangle_i +g Re (\xi A_w).
\end{align}
\end{theorem}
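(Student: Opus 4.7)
My plan is to expand the post-selected pointer expectation $\langle \prod_{k \in T} r_k \rangle$ for arbitrary $T \subseteq \{1, \ldots, n\}$ as a power series in the $g_k$, exhibit a partition decomposition, and then invert via Proposition \ref{anti}. Concretely, I aim to show
\begin{equation*}
\Big\langle \prod_{k \in T} r_k \Big\rangle \ =\ \sum_{P \text{ partition of } T}\ \prod_{b \in P} G(b),
\end{equation*}
with $G(\{k\}) = \langle r_k \rangle_\ci + g_k \mathrm{Re}\{\xi_k (A_k)_w\}$ and, for $|b| \geq 2$, $G(b) = \big(\prod_{k \in b} g_k\big)\,\mathrm{Re}\{\xi_b (\overleftarrow{A_b})^c_w\}$. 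By the Möbius inversion encoded in Proposition \ref{anti}, this decomposition identifies $G(T)$ uniquely with $\langle \prod_T r \rangle^c$, which immediately gives (\ref{main-result}) for $|T| \geq 2$, and (\ref{main-result1}) for $|T|=1$ (the extra $\langle r \rangle_\ci$ reflects that the one-point ``cumulant'' is simply the expectation).

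The first substantive step is the base case $n=1$: expand $e^{-i g s A}$ on both the ket and bra sides, compute $\langle \chi | r | \chi \rangle$ and $\langle \chi | \chi \rangle$ through order $g$, and divide; the result is $\langle r \rangle = \langle r \rangle_\ci + 2g\,\mathrm{Im}\big(A_w\,[\langle rs \rangle_\ci - \langle r \rangle_\ci \langle s \rangle_\ci]\big)$, which matches (\ref{main-result1}) with the prescribed $\xi = -2i(\langle rs \rangle_\ci - \langle r \rangle_\ci \langle s \rangle_\ci)$. For general $T$ I would expand each interaction unitary on both sides simultaneously, labelling pointer $k$'s contribution by $(j_k, j'_k) \in \bbN^2$; to total order $\prod_{k \in T} g_k$ only configurations with $j_k + j'_k \leq 1$ survive. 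Each such configuration produces a product of pointer moments (some combination of $\langle r_k s_k \rangle_\ci$, $\langle s_k r_k \rangle_\ci$, and $\langle r_k \rangle_\ci$) multiplied by a system matrix element in which the $A_k$'s of the ``active'' pointers are inserted on the forward or backward path of the post-selection bracket. Grouping these configurations by which subsets of pointers jointly appear in each sequential weak-value block is what will yield the claimed partition structure.

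The main obstacle is recognizing, from this raw expansion, the partition decomposition with the specific block function $G$ involving \emph{cumulants} of sequential weak values. I would close this step by induction on $|T|$: assume $G(b) = \langle \prod_b r \rangle^c$ together with the explicit formula for it for all $|b| < |T|$; use Proposition \ref{anti} to write $\langle \prod_T r \rangle = \langle \prod_T r \rangle^c + \sum_{P \ne \{T\}} \prod_b \langle \prod_b r \rangle^c$; substitute the inductive expressions for the lower-order cumulants; and verify that the remainder equals $\big(\prod_{k \in T} g_k\big)\,\mathrm{Re}\{\xi_T (\overleftarrow{A_T})^c_w\}$. The key simplification is that the sequential weak-value cumulant $(\overleftarrow{A_T})^c_w$ is itself defined by the same Möbius inversion over partitions (cf.\ (\ref{4weak})), so both the pointer-side and system-side cumulants satisfy identical recursions; matching the leading $\prod_{k\in T} g_k$ coefficient---which gives exactly the characteristic difference of products $\prod_k \langle r_k s_k \rangle_\ci - \prod_k \langle r_k \rangle_\ci \langle s_k \rangle_\ci$ appearing in $\xi$---closes the induction.
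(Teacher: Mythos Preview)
Your strategy---establish a partition decomposition of the moments and then invoke Proposition~\ref{anti}---is a legitimate alternative framework to the paper's direct combinatorial attack, but as written it has two real gaps.

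First, the claim that ``to total order $\prod_{k\in T} g_k$ only configurations with $j_k+j'_k\le 1$ survive'' is false for the \emph{moment} $\langle\prod_T r\rangle$: the appendix computation (\ref{horrible}) exhibits explicit $g_1^2$ and $g_2^2$ terms in $\langle q_1q_2\rangle$ at total order~$2$. These do cancel in the \emph{cumulant}, but that is part of what must be proved. (There is in fact a clean argument you miss: if $g_k=0$ then pointer $k$ decouples and becomes independent of the others, so Proposition~\ref{indep-lemma} forces the cumulant to vanish; hence every monomial in its Taylor expansion carries positive degree in each $g_k$. Your induction as stated cannot recover this, since knowing $G(b)$ only to order $|b|$ is not enough to control $\prod_b G(b)$ to total order~$|T|$.)

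Second---and this is the essential gap---even after restricting to the multilinear coefficient of $\prod_k g_k$, the moment $\langle\prod_T r\rangle$ receives a contribution from \emph{every} split $T=b^{+}\sqcup b^{-}$ of the pointers between the ket and bra sides of the expansion, producing cross terms of the form $(\overleftarrow{A_{b^{+}}})_w\,\overline{(\overleftarrow{A_{b^{-}}})_w}$ (each with its own pointer prefactor), together with further terms from expanding the normalising denominator. Only the pure cases $b^{-}=\emptyset$ or $b^{+}=\emptyset$ survive to give $\mathrm{Re}\{\xi\,(\overleftarrow{A_T})^c_w\}$; all the mixed ket/bra terms, and the numerator/denominator cross terms, must be shown to cancel against the lower-partition contributions in the cumulant sum. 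This cancellation is the combinatorial heart of the proof---it is precisely what the paper isolates as Lemmas~\ref{vanishing} and~\ref{one-index-set} (the ``mixing'' argument over partitions)---and your closing assertion that ``both the pointer-side and system-side cumulants satisfy identical recursions'' does not supply it. The weak-value cumulant is built only from the holomorphic quantities $(\overleftarrow{A_b})_w$, so the complex-conjugated cross terms fit no recursion it obeys; their elimination is a separate combinatorial fact that your sketch still owes.
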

\begin{proof}
We use the methods of \cite{MJP07} to calculate the expectations of
products of pointer variables for sequential weak measurements. Let
the initial and final states of the system be $\ket{\psi_i}$ and
$\ket{\psi_f}$, respectively. Consider some subset $b=\{b_1, \ldots,
b_\kappa\}$ of $\{1, \ldots, n\}$, with $b_1 \le b_2 \le \ldots \le
b_\kappa$. The state of the system and the pointers $b_1, \ldots,
b_\kappa$ after the coupling of those pointers is
\begin{align}\label{totalstate}
\Psi_{\cS,\cM}=U_{n+1}\dots U_{b_\kappa+1}
e^{-ig_{b_\kappa}s_{b_\kappa}A_{b_\kappa}}U_{b_\kappa} \ldots
e^{-ig_{b_1}s_{b_1}A_{b_1}}U_{b_1}\dots U_1 \ket{\psi_i}\phi_{b_1}(r_{b_1}) \ldots \phi_{b_\kappa}(r_{b_\kappa}),
\end{align}
and following post-selection by the system state $\ket{\psi_f}$, the
state of the pointers is
\begin{align} \label{bigstate}
\Psi_\cM=\bra{\psi_f}U_{n+1}\dots U_{b_\kappa+1}
e^{-ig_{b_\kappa}s_{b_\kappa}A_{b_\kappa}}U_{b_\kappa} \ldots
e^{-ig_{b_1}s_{b_1}A_{b_1}}U_{b_1}\dots U_1\ket{\psi_i}\phi_{b_1}(r_{b_1}) \ldots \phi_{b_\kappa}(r_{b_\kappa}).
\end{align}
Expanding each exponential, we have
\begin{align} \label{expectation}
\langle r_{b_1} \ldots r_{b_\kappa} \rangle &=\frac{\int
\overline{\Psi}_\cM r_{b_1} \ldots r_{b_\kappa} \Psi_\cM dr_{b_1}
\ldots dr_{b_\kappa}}{\int |\Psi_\cM|^2 dr_{b_1} \ldots
dr_{b_\kappa}},\\
\label{sumratio}&=\frac{\sum_{i_1 \ldots i_n \in b\ ;\ j_1 \ldots j_n \in b} \ \alpha_{i_1 \ldots i_n}\overline{\alpha}_{j_1 \ldots j_n}u^{b_1}_{i_{b_1}j_{b_1}} \ldots u^{b_\kappa}_{i_{b_\kappa}j_{b_\kappa}}}{\sum_{i_1 \ldots i_n \in b\ ;\ j_1 \ldots j_n \in b} \ \alpha_{i_1 \ldots i_n}\overline{\alpha}_{j_1 \ldots j_n}v^{b_1}_{i_{b_1}j_{b_1}} \ldots v^{b_\kappa}_{i_{b_\kappa}j_{b_\kappa}}},
\end{align}
where $i_k \ge 0$ are integers, $i_1, \ldots, i_n \in b$ means that
$i_l=0$ for $l \notin b$, and
\begin{align}
\label{alpha} \alpha_{i_1 \ldots i_n}&=\left(\prod_{k=1}^n g_k^{i_k}\right)\ (A^{i_n}_n, \ldots A^{i_1}_1)_w, \\ 
\label{u} u^k_{l m}&=\int (m!)^{-1}\overline{(-is_k)^{m}\phi_k(r_k)}r_k(l!)^{-1}(-is_k)^{l}\phi_k(r_k)dr_k,\\
\label{v} v^k_{l m}&=\int (m!)^{-1}\overline{(-is_k)^{m}\phi_k(r_k)}(l!)^{-1}(-is_k)^{l}\phi_k(r_k)dr_k.
\end{align}
Let us write (\ref{sumratio}) as
\begin{align}
\langle r_{b_1} \ldots r_{b_\kappa} \rangle =\frac{\sum_{{\bf i} \in b, {\bf j} \in b} x_{{\bf i};{\bf j}}}{\sum_{{\bf i} \in b, {\bf j} \in b} y_{{\bf i};{\bf j}}},
\end{align}
where 
\begin{align}
\label{x} x_{{\bf i};{\bf j}}&=\alpha_{i_1 \ldots i_n}\overline{\alpha}_{j_1 \ldots j_n}u^{b_1}_{i_{b_1}j_{b_1}} \ldots u^{b_\kappa}_{i_{b_\kappa}j_{b_\kappa}},\\ 
\label{y} y_{{\bf k};{\bf
l}}&=\alpha_{k_1 \ldots k_n}\overline{\alpha}_{l_1 \ldots
l_n}v^{b_1}_{i_{b_1}j_{b_1}} \ldots
v^{b_\kappa}_{i_{b_\kappa}j_{b_\kappa}},
\end{align}
and ${\bf i}$ denotes the index set $\{i_1 \ldots i_n\}$,
etc.. Define
\begin{align}
X_b=\sum_{{\bf i} \in b, {\bf j} \in b} x_{{\bf i};{\bf j}},\ \ Y_b=\sum_{{\bf i} \in b, {\bf j} \in b} y_{{\bf i};{\bf j}}.
\end{align}
Then 
\begin{align}\label{CXY1}
\langle r_1, \ldots , r_n \rangle^c&=\sum_{b_1, \ldots , b_k} (k-1)!(-1)^{k-1} \prod_{l=1}^k \langle r_{b_l(1)} \ldots r_{b_l(|b_l|)} \rangle \\ \label{CXY2}
&=\sum_{b_1, \ldots , b_k} (k-1)!(-1)^{k-1}
\prod_{l=1}^k \frac{X_{b_l}}{Y_{b_l}}.
\end{align}
Set $\cY=\prod_{b \subset \{1,\ldots , n\}} Y_b$, where $b$ in the
product ranges over all distinct subsets of the integers $\{1,\ldots ,
n\}$. Then $\cY \langle r_1 \ldots r_n \rangle^c$ is an (infinite) weighted sum of
terms 
\begin{align} \label{z}
z_\cI=(x_{{\bf i}(1);{\bf j}(1)} \ldots x_{{\bf i}(m);{\bf
j}(m)})(y_{{\bf k}(1);{\bf l}(1)} \ldots y_{{\bf k}(m^\prime);{\bf
l}(m^\prime)}),
\end{align}
where 
\begin{align}\label{index}
\cI &=\cI_i\cup \cI_j \cup \cI_k \cup \cI_l\\ &=\{{\bf i}(1), \ldots,
{\bf i}(m)\} \cup \{{\bf j}(1), \ldots, {\bf j}(m)\} \cup \{{\bf
k}(1), \ldots, {\bf k}(m^\prime)\} \cup \{{\bf l}(1), \ldots, {\bf
l}(m^\prime)\} \nonumber
\end{align}
denotes the set of all the index sets that occur in $z_\cI$. The
strategy is to show that, when the size of the index set $\cI$ is less
than $n$, the coefficient of $z_\cI$ vanishes; by (\ref{alpha}) this
implies that all coefficients of order less than $n$ in $g$ vanish. We
then look at the index sets of size $n$, corresponding to terms of
order $g^n$, and show that the relevant terms sum up to the right-hand
side of (\ref{main-result}). But if $\cY \langle r_1 \ldots r_n
\rangle^c=g^nx+O(g^{n+1})$ for some x, then we also have $\langle r_1
\ldots r_n \rangle^c=g^nx+O(g^{n+1})$, since $\cY=1+O(g)$.

Let $b=\{b_1, \ldots, b_s\}$ be a partition of $\{1, \ldots, n\}$. We
say that $b$ is a {\em valid} partition for $\cI$ if
\begin{enumerate}[(i)]
\item For each $r$ with $1 \le r \le m$, ${\bf i}(r) + {\bf j}(r) \in
b_l$, for some $b_l$, and we can associate a distinct $b_l$ to each
$r$. (Here ${\bf i} + {\bf j}$ means the index set
$\{i_1+j_1, \ldots i_n+j_n\}$.)
\item For each $r$ with $1 \le r \le m^\prime$, ${\bf k}(r) + {\bf
l}(r) \in S$, for some subset $S \subset \{1,\ldots , n\}$ that is
not in the partition $b$, i.e. for which $S \ne b_l$ for any $l$, and
we can associate a distinct $S$ to each $r$. Let $\gamma(\cI,b)$ be
the number of ways of associating a subset $S$ to each $r$.
\end{enumerate}

\begin{lemma} \label{vanishing}
The coefficient of $z_\cI$ in $\cY \langle r_1 \ldots r_n \rangle^c$ is zero if all
the index sets in $\cI$ have a zero at some position $r$.
\end{lemma}
\begin{proof}
If we expand $\cY \langle r_1 \ldots r_n \rangle^c$ using (\ref{CXY2}), each term
in this expansion is associated with a partition $b$ of $\{1, \ldots,
n\}$. Let $b$ be a valid partition for $\cI$, and let $c=\{c_1,
\ldots, c_s\}$ denote the partition derived from $b$ by removing $r$
from the subset $b_l$ that contains it, and deleting that subset if it
contains only $r$. Then the following partitions include $b$ and are
all valid :
\begin{align}\label{cancel}
&c^{(1)}=\{(rc_1),c_2, \ldots ,c_s\}\\\nonumber
&c^{(2)}=\{c_1,(rc_2), \ldots ,c_s\}\\\nonumber
&\ldots\ldots\ldots \\\nonumber
&c^{(s)}=\{c_1,c_2, \ldots ,(rc_s)\}\\\nonumber
&\\\nonumber
&c^{(s+1)}=\{r,c_1,c_2, \ldots ,c_s\}.
\end{align}
Each partition $c^{(i)}$, for $1 \le i \le s+1$ contributes
$\gamma(\cI,b)$ to the coefficient of $z_\cI$ in $\cY \prod_{l=1}^k
X_{c^{(i)}}/Y_{c^{(i)}}$, and since this term has coefficient
$(s-1)!(-1)^{(s-1)}$ in (\ref{CXY2}) for partitions $c^{(1)}, c^{(2)},
\ldots c^{(s)}$, and $s!(-1)^s$ for $c^{(s+1)}$, the sum of all
contributions is zero.
\end{proof}

From equations (\ref{alpha}) and (\ref{index}), the power of $g$ in
the term $z_\cI$ is $|I|=|I_i|+|I_j|+|I_k|+|I_l|$. This, together with
the preceding Lemma, implies that the lowest order non-vanishing terms
in $\cY \langle r_1 \ldots r_n \rangle^c$ are $z_\cI$'s that have a
'1' occurring once and once only in each position; we call these {\em
complete lowest-degree} terms.
\begin{lemma} \label{one-index-set}
The coefficient of a complete lowest-degree term $z_\cI$ in $\cY
\langle r_1 \ldots r_n \rangle^c$ is zero unless only one of the four classes of
indices in $\cI$, viz. $\cI_i$, $\cI_j$, $\cI_k$ or $\cI_l$, has
non-zero terms.
\end{lemma}
\begin{proof}
Consider first the case where the indices in $\cI_j$ and $\cI_l$ are
zero, and where both $\cI_i$ and $\cI_k$ have some non-zero
indices. Let $b=\{b_1, \ldots, b_r\}$ be the partition whose subsets
consists of the non-zero positions in index sets ${\bf i}(t)$ in
$\cI_i$, and let $c=\{c_1, \ldots, c_s\}$ be some partition of the
remaining integers in $\{1, \ldots, n\}$. Suppose $s \le r$. Then we
can construct a set of partitions by mixing $b$ and $c$; these have
the form
\begin{align} \label{mix}
d^{(w)}=\{c_{i_1}, \dots, c_{i_t},(x_1b_1), \ldots, (x_rb_r)\},
\end{align}
where each $x_i$ is either empty or consists of some $c_i$, and all
the subsets $c_i$ are present once only in the partition. If any
$d^{(w)}$ is eligible, all the other mixtures will also be
eligible. Furthermore, the set of all eligible partitions can be
decomposed into non-overlapping subsets of mixtures obtained in this
way.

Any mixture $d^{(w)}$ gives the same value of $\gamma(\cI,d^{(w)})$,
which we denote simply by $\gamma$; so to show that all the
contributions to the coefficient of $z_\cI$ cancel, we have only to
sum over all the mixtures, weighting a partition with $t$ subsets by
$(t-1)!(-1)^{t-1}$. This gives
\begin{align*}
\mbox{Coefficient of }z_\cI&=\gamma \sum_{i=0}^s (s+r-1)!(-1)^{s+r-i}\binom{s}{i}\binom{r}{i}i!\\
&=\gamma (-1)^{s+r-1}s! \sum (s+r-i-1) \ldots (s-i+1) \binom{r}{i}(-1)^i,\\
&=\gamma (-1)^{s+r-1}s! \frac{\partial^{r-1}}{\partial x^{r-1}} \left\{x^{s-1}(x-1)^r \right\} \vert_{x=1}=0.
\end{align*}

The above argument applies equally well to the situation where $\cI_i$
and $\cI_l$ both have some non-zero indices and indices in $\cI_j$ and
$\cI_k$ are zero. If the non-zero indices are present in $\cI_i$ and
$\cI_j$, we can take any eligible partition $a=\{a_1, \ldots, a_r\}$
and divide each subset $a_k$ into two subsets $b_k$ and $c_k$ with the
indices from $\cI_i$ in $b_k$ and those from $\cI_j$ in $c_k$. All
the mixtures of type (\ref{mix}) are eligible, and they include the
original partition $a$. By the above argument, the coefficients of
$z_\cI$ arising from them sum to zero. Other combinations of indices
are dealt with similarly.

Note that, for $n=4$ and for the index sets $(1,1,0,0) \in \cI_i$ and
$(0,0,1,1) \in \cI_j$, the ``mixture'' argument shows that coefficient
of $z_\cI$ coming from $\langle r_1r_2r_3r_4\rangle$ cancels that
coming from $\langle r_1r_2 \rangle \langle r_3r_4\rangle$ to give
zero. This cancellation occurs with the cumulant (\ref{4cumulant}),
but not with the covariance (\ref{4covariance}), where the term
$\langle r_1r_2 \rangle \langle r_3r_4\rangle$ is absent.

\end{proof}

The only terms that need to be considered, therefore, are complete
lowest-degree terms with non-zero indices only in one of the sets
$\cI_i$, $\cI_j$, $\cI_k$ and $\cI_l$. It is easy to calculate the
coefficients one gets for such terms. Consider the case of $\cI_i$. We
only need to consider the single partition $b$ whose subsets are the
index sets of $\cI_i$. For this partition, by (\ref{z}), (\ref{x}) and
(\ref{y}),
\begin{align}
z_\cI=\prod_{e=1}^t \alpha_{{\bf i}(e)}\prod_{k=1}^n
u^k_{1,0}v^k_{0,0}=g_1 \ldots g_n\prod_{e=1}^t \left(A_{{\bf i}(e)(|{\bf i}(e)|)}, \ldots ,A_{{\bf i}(e)(1)}\right)_w\prod_{k=1}^n \langle r_ks_k \rangle_\ci
\end{align}
From (\ref{CXY2}), $z_\cI$ appears in $\cY \langle r_1 \ldots r_n
\rangle^c$ with a coefficient $(t-1)!(-1)^{t-1}$. So, summing over all
$z_\cI$ with indices in $\cI_i$, one obtains $g_1 \ldots g_n(A_n,
\ldots, A_1)^c_w\prod_{k=1}^n (-i\langle r_ks_k
\rangle_\ci)$. Similarly, from (\ref{alpha}), (\ref{u}) and (\ref{v}),
summing over the $z_\cI$ with indices in $\cI_j$ gives the complex
conjugate of $g_1 \ldots g_n (A_n, \ldots, A_1)^c_w\prod_{k=1}^n
(-i\langle r_ks_k \rangle_\ci)$. Thus $\cI_i$ and $\cI_j$ together
give $g_1 \ldots g_n (2\prod_{k=1}^n (-i\langle r_ks_k \rangle_\ci))
Re \left\{(A_n, \ldots, A_1)^c_w \right\}$.

This corresponds to (\ref{main-result}), but with only the first half
of $\xi$ as defined by (\ref{xi}). The rest of $\xi$ comes from the
index sets $\cI_k$ and $\cI_l$. However, the sum of the coefficients
of $z_\cI$ for the same index set in $\cI_i$ and $\cI_k$ is zero. This
is true because, for any complete lowest degree index set, the sum of
coefficients for all $z_\cI$ with the indices divided in any manner
between $\cI_i$ and $\cI_k$ is zero, being the number ways of
obtaining that index set from $\cY$ times $\sum_{t=1}^n
(t-1)(-1)^{t-1}$. But by Lemma \ref{one-index-set}, the coefficient of
$z_\cI$ is zero unless the index set comes wholly from $\cI_i$ or
$\cI_k$. Now (\ref{z}), (\ref{x}) and (\ref{y}) tell us that, for an
index set in $\cI_k$,
\begin{align}
z_\cI=\prod_{e=1}^t \alpha_{{\bf i}(e)}\prod_{k=1}^n
u^k_{0,0}v^k_{1,0}=g_1 \ldots g_n\prod_{e=1}^t \left(A_{{\bf i}(e)(|{\bf i}(e)|)}, \ldots ,A_{{\bf i}(e)(1)}\right)_w\prod_{k=1}^n \langle r_k \rangle_\ci\langle s_k \rangle_\ci,
\end{align}
and from the above argument, this appears appears in $\cY \langle r_1
\ldots r_n \rangle^c$ with coefficient $-(t-1)!(-1)^{t-1}$. Again, the
index sets in $\cI_l$ give the complex conjugate of those in
$\cI_k$. Thus we obtain the remaining half of $\xi$, which proves
(\ref{main-result}) for $n \ge 2$. For $n=1$ the constant terms (of
order zero in $g$) in $\cY \langle r \rangle$ do not vanish, but the
proof goes through if we consider $\cY(\langle r \rangle-\langle r
\rangle_\ci)$ instead.

\end{proof}

\section{Exploring the theorem}

Consider first the simplest case, where $n=1$ and $r=q$. We take
$H_{int}=g \delta(t) pA$ throughout this section, so $s=p$. Then
(\ref{main-result1}) and (\ref{xi}) give
\begin{align}\label{qmean}
\langle q \rangle =\langle q \rangle_\ci +gRe(\xi_q A_w)\qquad  \mbox{with} \qquad
\xi_q=-2i\left(\langle qp \rangle_\ci-\langle q \rangle_\ci \langle p \rangle_\ci \right),
\end{align}
which we have already seen as equations (\ref{firstxi}) and
(\ref{xi1}).  If we measure the pointer momentum, so $r=p$, we find
\begin{align}\label{pmean}
\langle p \rangle =\langle p\rangle_\ci+gRe(\xi_p A_w) \qquad \mbox{with} \qquad
\xi_p=-2i(\langle p^2\rangle_\ci-\langle p\rangle_\ci^2),
\end{align}
which is equivalent to the result obtained in \cite{J07}.

For two variables, our theorem for $r_1=q_1, r_2=q_2$, is
\begin{align}\label{qq}
\langle q_1q_2 \rangle^c=g_1g_2 Re (\xi_{qq} (A_2,A_1)^c_w),
\end{align}
with
\begin{align}\label{xiqq}
\xi_{qq}=2(\langle q_1\rangle_\ci\langle p_1\rangle_\ci\langle q_2\rangle_\ci\langle p_2\rangle_\ci-\langle q_1p_1\rangle_\ci\langle q_2p_2\rangle_\ci).
\end{align}
The calculations in the Appendix allow one to check (\ref{qq}) and
(\ref{xiqq}) by explicit evaluation; see (\ref{explicit}). Note in
passing that, if one writes $\Delta q=\sqrt{\langle (q_1- \langle q_1
\rangle)^2\rangle}$, the Cauchy-Schwarz inequality
\begin{align*}
\{\langle q_1 q_2 \rangle^c \}^2=\left\{\langle (q_1-\langle q_1 \rangle)(q_2-\langle q_2 \rangle)\right\}^2 \le \langle (q_1- \langle q_1 \rangle)^2\rangle \langle (q_2- \langle q_2 \rangle)^2\rangle
\end{align*}
implies a Heisenberg-type inequality
\begin{align*}
\Delta q_1 \Delta q_2 \ge g_1g_2 Re \{ \xi_{qq} (A_2,A_1)^c_w\},
\end{align*}
relating the pointer noise distributions of two weak measurements
carried out at different times during the evolution of the system.

When one or both of the $q_k$ in (\ref{qq}) is replaced by the
pointer momentum $p_k$, we get
\begin{align}
\label{qp} \langle q_1 p_2 \rangle^c&=g_1g_2 Re \left( \xi_{qp} (A_2,A_1)^c_w\right),\\
\label{pp} \langle p_1 p_2 \rangle^c&=g_1g_2 Re \left( \xi_{pp} (A_2,A_1)^c_w\right),
\end{align}
with \begin{align}
\label{xiqp} \xi_{qp}&=-2\left(\langle q_1p_1\rangle_\ci\langle
p_2^2\rangle_\ci-\langle q_1\rangle_\ci\langle p_1\rangle_\ci
\langle p_2\rangle_\ci^2 \right),\\
\label{xipp} \xi_{pp}&=-2\left(\langle
p_1^2\rangle_\ci\langle p_2^2\rangle_\ci-\langle
p_1\rangle_\ci^2 \langle p_2\rangle_\ci^2 \right).
\end{align}

Consider now the special case where $\phi$ is real with zero
mean. Then the very complicated expression for $\langle q_1q_2
\rangle$ in (\ref{horrible}) reduces to
\begin{align} 
\label{2q} \langle q_1q_2 \rangle=\frac{g_1g_2}{2}\ Re \left[
(A_2,A_1)_w+(A_1)_w({\bar A}_2)_w \right],
\end{align}
as shown in \cite{MJP07}. Two further examples from \cite{MJP07} are
\begin{align} 
\label{3q} \langle q_1q_2q_3 \rangle&=\frac{g_1g_2g_3}{4}\ Re 
\left[(A_3,A_2,A_1)_w+(A_3,A_2)_w({\bar A}_1)_w+(A_3,A_1)_w(({\bar
A}_2)_w+(A_2,A_1)_w({\bar A}_3)_w \right],\\
\label{4q} \langle q_1q_2q_3q_4 \rangle&=\frac{g_1g_2g_3g_4}{8}\ Re \left[
(A_4,A_3,A_2,A_1)_w+(A_4,A_3,A_2)_w({\bar A}_1)_w+\ldots +(A_4,A_3)_w(\overline{A_2,A_1})_w+\ldots \right].
\end{align}
We can use these formulae to calculate the cumulant $\langle q_1
\ldots q_n \rangle$, and thus check Theorem \ref{main-theorem}for this special
class of wavefunctions $\phi$. Each formula contains on the right-hand
side a leading sequential weak value, but there are also extra terms,
such as $(A_1)_w({\bar A}_2)_w$ in (\ref{2q}) and $(A_2,A_1)_w({\bar
A}_3)_w$ in (\ref{3q}). All these extra terms are eliminated when the
cumulant is calculated, and we are left with (\ref{main-result}) with
$\xi_{q_1 \ldots q_n}=(1/2)^{n-1}$.

This gratifying simplification depends on the fact that the cumulant
is a sum over all partitions. For instance, it does not occur if one
uses the covariance instead of the cumulant. To see this, look at the
case $n=4$: The term $\langle q_1q_2q_3q_4 \rangle$ in
$Cov(q_1,q_2,q_3,q_4)$, the covariance of pointer positions, gives
rise via (\ref{4q}) to weak value terms like
$(A_4,A_3)_w(\overline{A_2,A_1})_w$. However, (\ref{4covariance})
together with (\ref{2q}), (\ref{3q}) and (\ref{4q}) show that
$Cov(q_1,q_2,q_3,q_4)$ has no other terms that generate any multiple
of $(A_4,A_3)_w(\overline{A_2,A_1})_w$, and consequently this weak
value expression cannot be cancelled and must be present in
$Cov(q_1,q_2,q_3,q_4)$. This means that there cannot be any equation
relating $Cov(q_1,q_2,q_3,q_4)$ and $Cov(A_4,A_3,A_2,A_1)_w$. This
negative conclusion does not apply to the cumulant $\langle q_1q_2q_3q_4
\rangle^c$, as this includes terms such as $\langle q_1q_2
\rangle\langle q_3q_4 \rangle$; see (\ref{4cumulant}).

\section{Simultaneous weak measurement}

We have treated the interactions between each pointer and the system
individually, the Hamiltonian for the $k$'th pointer and system being
$H_k=g_k \delta(t-t_k) s_kA_k$, but of course we can equivalently
describe the interaction between all the pointers and the system by
$H= \sum_k g_k \delta(t-t_k) s_kA_k$. For sequential measurements we
implicitly assume that all the times $t_k$ are distinct. However, the
limiting case where there is no evolution between coupling of the
pointers and all the $t_k$'s are equal is of interest, and is the {\em
simultaneous} weak measurement considered in \cite{RS04,R04,LR05}. In
this case, the state of the pointers after post-selection is given by
\begin{align} \label{simul-bigstate}
\Psi_\cM=\bra{\psi_f}e^{-i(g_1s_1A_1 \ldots +g_ns_nA_n)}
\ket{\psi_i}\phi_1(r_1) \ldots \phi_n(r_n).
\end{align}
The exponential $e^{-i(g_1s_1A_1 \ldots +g_ns_nA_n)}$ here differs
from the sequential expression $e^{-ig_ns_nA_n} \ldots
e^{-ig_1s_1A_1}$ in (\ref{bigstate}) in that each term in the
expansion of the latter appears with the operators in a specific
order, viz. the arrow order $\leftarrow$ as in (\ref{4weak}), whereas
in the expansion of the former the same term is replaced by a
symmetrised sum over all orderings of operators.  For instance, for
arbitrary operators $X$, $Y$ and $Z$, the third degree terms in
$e^Xe^Ye^Z$ include $X^3/3!$, $X^2Y/2!$ and $XYZ$, whose counterparts
in $e^{(X+Y+Z)}$ are, respectively, $X^3/3!$, $\{X^2Y+XYX+YX^2\}/3!$
and $\{XYZ+XZY+YXZ+YZX+ZXY+ZYX\}/3!$.
Apart from this symmetrisation, the calculations in Section
\ref{theorem-section} can be carried through unchanged for
simultaneous measurement. Thus if we replace the sequential weak value
by the {\em simultaneous weak value} \cite{RS04,R04,LR05}
\begin{align} \label{swv}
(A_{i_k}, \ldots, A_{i_1})_{ws}=\frac{1}{k!}\sum_{\pi \in S_k} \left(A_{i_{\pi(k)}}, \ldots, A_{i_{\pi(1)}} \right)_w,
\end{align}
where the sum on the right-hand side includes all possible orders of
applying the operators, we obtain a version of Theorem
\ref{main-theorem} for simultaneous weak measurement:
\begin{align}
\langle r_1 \ldots r_n \rangle^c= g_1 \ldots g_n Re \left\{ \xi
(A_n, \ldots, A_1)^c_{ws}\right\}.
\end{align}
Likewise, relations such (\ref{2q}), (\ref{3q}), etc., hold with
simultaneous weak values in place of the sequential weak values;
indeed, these relations were first proved for simultaneous
measurement \cite{RS04,R04}.

\begin{figure}[hbtp]
\centerline{\epsfig{file=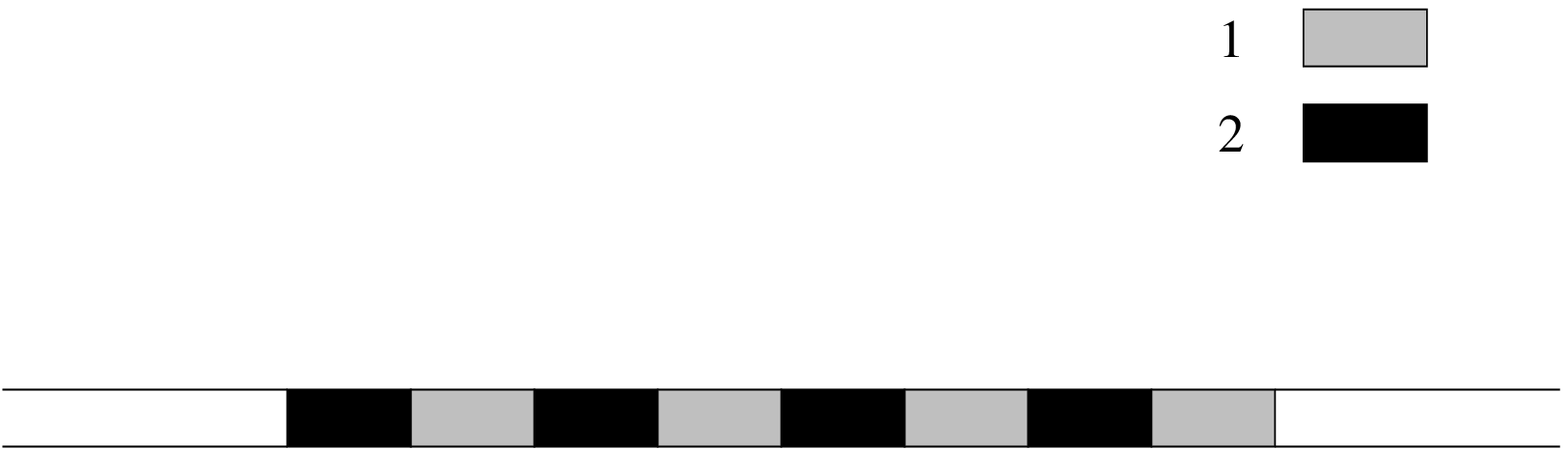,width=0.5\textwidth}}
\caption{\label{alternate}}
\end{figure}

From (\ref{swv}) we see that, when the operators $A_k$ all commute,
the sequential and simultaneous weak values coincide. One important
instance of this arises when the operators $A_k$ are applied to
distinct subsystems, as in the case of the simultaneous weak
measurements of the electron and positron in Hardy's paradox
\cite{Hardy92,ABPRT01}. 

When the operators do not commute, the meaning of simultaneous weak
measurement is not so obvious. One possible physical interpretation
follows from the well-known formula
\begin{align} \label{formula}
e^{X+Y}=\lim_{N \to \infty}(e^{X/N}e^{Y/N})^N
\end{align}
and its analogues for more operators. Suppose two pointers, one for
$A_1$ and one for $A_2$, are coupled alternately in a sequence of $N$
short intervals (Figure \ref{alternate}, top diagram) with coupling
strength $g_k/N$ for each interval. This is an enlarged sense of
sequential weak measurement \cite{MJP07} in which the same pointer is
used repeatedly, coherently preserving its state between
couplings. The state after post-selection is
\begin{align}\label{2simul}
\Psi_\cM=\bra{\psi_f}\left(e^{-i\left(\frac{g_2}{N}\right)s_2A_2}e^{-i\left(\frac{g_1}{N}\right)s_1A_1}\right)^N
\ket{\psi_i}\phi_1(r_1)\phi_2(r_2).
\end{align}
 From (\ref{formula}) we deduce that
\begin{align}
\Psi_\cM \approx \bra{\psi_f}e^{-i(g_2s_2A_2+g_1s_1A_1)}\ket{\psi_i}\phi_1(r_1)\phi_2(r_2).
\end{align}
This picture readily extends to more operators $A_k$.

One can also simulate a simultaneous measurement by averaging the
results of a set of sequential measurements with the operators in all
orders; in effect, one carries out a set of experiments that implement
the averaging in (\ref{swv}). There is then no single act that counts
as simultaneous measurement, but weak measurement in any case relies
on averaging many repeats of experiments in order to extract the
signal from the noise. In a certain sense, therefore, sequential
measurement includes and extends the concept of simultaneous
measurement. However, if we wish to accomplish simultaneous
measurement in a single act, then we need a broader concept of weak
measurement where pointers can be re-used; indeed, we can go further,
and consider generalised weak coupling between one time-evolving
system and another, followed by measurement of the second
system. However, even in this case, the measurement results can be
expressed algebraically in terms of the sequential weak values of the
first system \cite{MJP07}.

\section{Lowering operators}

Lundeen and Resch \cite{LR05} showed that, for a gaussian initial
pointer wavefunction, if one defines an operator $a$ by
\begin{align*} 
a_{LR}=\langle p^2 \rangle_\ci^{1/2} \left(q+\frac{ip}{2\langle p^2 \rangle_\ci}\right),
\end{align*}
then the relationship 
\begin{align*}
\langle a_{LR} \rangle=g \langle p^2 \rangle_\ci^{1/2} A_w
\end{align*}
holds. They argued that $a_{LR}$ can be interpreted physically as a
lowering operator, carrying the pointer from its first excited state
$\ket{1}$, in number state notation, to the gaussian state $\ket{0}$
(despite the fact that the pointer is not actually in a harmonic
potential). Although $a_{LR}$ is not an observable, $\langle a_{LR}
\rangle$ can be regarded as a prescription for combining expecations
of pointer position and momentum to get the weak value.

If instead of $a_{LR}$ one takes
\begin{align} \label{gaussian}
a=q+\frac{ip}{2\langle p^2 \rangle_\ci},
\end{align}
then the even simpler relationship
\begin{align} \label{LR1}
\langle a \rangle=g A_w,
\end{align}
holds. We refer to $a$ as a generalised lowering operator. 

Lundeen and Resch also extended their lowering operator concept to
simultaneous weak measurement of several observables $A_k$. Rephrased
in terms of our generalised lowering operators $a_k$ defined by
(\ref{gaussian}), their finding \cite{LR05} can be stated as
\begin{align}\label{simul}
\langle a_1 \ldots a_n \rangle=g_1 \ldots g_n(A_1 \ldots A_n)_{ws}.
\end{align}
This is of interest for two reasons. First, the entire simultaneous
weak value appears on the right-hand side, not just its real part; and
second, the ``extra terms'' in the simultaneous analogues of
(\ref{2q}), (\ref{3q}) and (\ref{4q}) have disappeared. The lowering
operator seems to relate directly to weak values.

We can generalise these ideas in two ways. First, we extend them from
simultaneous to sequential weak measurements. Secondly, instead of
assuming the initial pointer wavefunction is a gaussian, we allow it
be arbitrary; we do this by defining a generalised lowering operator
\begin{align}\label{lowering}
a=q+i \frac{p}{\eta}, \qquad \mbox{with} \qquad \eta=-i{\overline
\xi}_p/{\overline \xi}_q.
\end{align} 
For a gaussian $\phi$, $\eta=2\langle p^2 \rangle_\ci$, so the above
definition reduces to (\ref{gaussian}) in this case. In general,
however, $\phi$ will not be annihilated by $a$ and is therefore not
the number state $\ket{0}$ (this state is a gaussian with complex
variance $\eta^{-1}$). Nonetheless, there is an analogue of Theorem
\ref{main-theorem} in which the whole sequential weak value, rather
than its real part, appears:
\begin{theorem}[Cumulant theorem for lowering operators]\label{lowering-theorem}
For $n>1$
\begin{align}\label{nlowering}
\langle a_1 \ldots a_n \rangle^c = g_1 \ldots g_n \ \vartheta \  (A_n, \ldots A_1)_w^c,
\end{align}
where $\vartheta$ is given by
\begin{align}\label{constant}
\vartheta=\sum_{(i_1, \ldots i_n) \in \{0,1\}^n} \ \frac{(-1)^{\sum i_j} \xi_{r_{i_1} \ldots r_{i_n}} \left( {\overline \xi}_{r_{1-i_1}} \ldots {\overline \xi}_{r_{1-i_n}} \right) }{2\left({\overline{ \xi}}_{p_1} \ldots {\overline{\xi}}_{p_n} \right)}.
\end{align}
For $n=1$ the
same result holds, but with the extra term $\langle a \rangle_\ci$:
\begin{align}\label{1lowering}
\langle a \rangle = \langle a \rangle_\ci +\vartheta g A_w.
\end{align}
\end{theorem}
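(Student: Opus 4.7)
The strategy is to reduce the claim to Theorem \ref{main-theorem} by exploiting multilinearity. Since the cumulant $\langle\cdot\rangle^c$ is a polynomial in ordinary expectations (see (\ref{cumulant})) and each expectation is multilinear in its arguments, the cumulant itself is multilinear in its slots. Writing $r_{i_k}$ for the $k$-th pointer observable — equal to $q_k$ when $i_k=0$ and $p_k$ when $i_k=1$ — and expanding $a_1 \ldots a_n = \prod_k (q_k + (i/\eta_k) p_k)$, we get
\begin{equation*}
\langle a_1 \ldots a_n \rangle^c = \sum_{(i_1,\ldots,i_n) \in \{0,1\}^n} \prod_{k=1}^n (i/\eta_k)^{i_k}\, \langle r_{i_1} \ldots r_{i_n} \rangle^c.
\end{equation*}
Each cumulant on the right is now directly computable from Theorem \ref{main-theorem} (with $s_k = p_k$).

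Writing $C := (A_n,\ldots,A_1)_w^c$ and using $Re\{\xi C\} = \tfrac{1}{2}(\xi C + \bar\xi \bar C)$, Theorem \ref{main-theorem} converts $\langle a_1 \ldots a_n\rangle^c$ into a $C$-part plus a $\bar C$-part. The definition $\eta_k = -i\bar\xi_{p_k}/\bar\xi_{q_k}$ gives
\begin{equation*}
(i/\eta_k)^{i_k} = (-1)^{i_k}\, \bar\xi_{r_{1-i_k}}/\bar\xi_{p_k},
\end{equation*}
so the coefficient of $C$ reads off immediately as $g_1 \ldots g_n \vartheta$, with $\vartheta$ exactly the expression (\ref{constant}). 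The heart of the proof is the cancellation of the $\bar C$-coefficient.

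For this, I appeal to the two-term product structure of $\xi$ in (\ref{xi}); with $s_k = p_k$,
\begin{equation*}
\bar\xi_{r_{i_1}\ldots r_{i_n}} = 2\, i^n\!\left[\prod_k \langle p_k r_{i_k}\rangle_\ci - \prod_k \langle r_{i_k}\rangle_\ci \langle p_k\rangle_\ci\right].
\end{equation*}
Because each bracket factorises across $k$, the sum $\sum_{(i_1,\ldots,i_n)} \prod_k(i/\eta_k)^{i_k} \bar\xi_{r_{i_1}\ldots r_{i_n}}$ collapses to a difference of two products,
\begin{equation*}
2\, i^n \left\{\prod_k\!\bigl[\langle p_k q_k\rangle_\ci + (i/\eta_k)\langle p_k^2\rangle_\ci\bigr] - \prod_k\!\bigl[\langle p_k\rangle_\ci\langle q_k\rangle_\ci + (i/\eta_k)\langle p_k\rangle_\ci^2\bigr]\right\},
\end{equation*}
and the equality of the two products factor-by-factor reduces, for each $k$, to $\bar\xi_{q_k} + (i/\eta_k)\bar\xi_{p_k} = 0$, which is precisely the defining property of $\eta_k$. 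This is the main technical point: the factorisation reduces the $n$-variable cancellation to $n$ independent copies of the $n=1$ identity. For $n=1$, applying (\ref{main-result1}) separately to $r=q$ and $r=p$ and combining linearly yields the constant term $\langle a\rangle_\ci$ and an $A_w$-coefficient $\vartheta$, while the $\bar A_w$ contribution again vanishes by the same identity, giving (\ref{1lowering}).
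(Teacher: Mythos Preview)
Your proof is correct and follows essentially the same route as the paper: expand $\langle a_1\ldots a_n\rangle^c$ multilinearly over $q_k,p_k$, apply Theorem~\ref{main-theorem} to each $\langle r_{i_1}\ldots r_{i_n}\rangle^c$, read off the $C$-coefficient as $\vartheta$, and show the $\bar C$-coefficient $\varpi$ vanishes via the product structure of $\xi$ in (\ref{xi}). The only differences are cosmetic: the paper keeps the general coupling observable $s_k$ (you specialised to $s_k=p_k$, though your argument extends verbatim), and the paper presents the $\varpi=0$ calculation by first substituting $(i/\eta_k)^{i_k}=(-1)^{i_k}\bar\xi_{r_{1-i_k}}/\bar\xi_{p_k}$ and then introducing auxiliary symbols $a_k,b_k,c_k,d_k$ to show two products coincide, whereas you leave $(i/\eta_k)$ in place and observe directly that the $k$-th factors of the two products agree precisely when $\bar\xi_{q_k}+(i/\eta_k)\bar\xi_{p_k}=0$, i.e.\ the defining relation for $\eta_k$. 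Your phrasing makes the role of $\eta_k$ more transparent.
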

\begin{proof}
Put $r_0=q$, $r_1=p$. Then
\begin{align*}
\langle a_1 \ldots a_n \rangle^c &=\langle (q_1+i p_1/\eta_1) \ldots (q_n+i p_n/\eta_n) \rangle^c,\\
&=\sum_{(i_1, \ldots i_n) \in \{0,1\}^n} \ \frac{(-1)^{\sum i_j}\langle r_{i_1} \ldots r_{i_n} \rangle^c \left( {\overline \xi}_{r_{1-i_1}} \ldots {\overline \xi}_{r_{1-i_n}} \right) }{\left({\overline \xi}_{p_1} \ldots {\overline \xi}_{p_n} \right)},\\
&=g_1 \ldots g_n \left[ \vartheta (A_n, \ldots A_1)^c_w + \varpi (\overline{A_n, \ldots A_1})^c_w \right],
\end{align*}
where we used Theorem \ref{main-theorem} to get the last line, and
where $\vartheta$ is given by (\ref{constant}) and $\varpi$ by
\begin{align*}
\varpi=\sum_{(i_1, \ldots i_n) \in \{0,1\}^n} \ \frac{(-1)^{\sum i_j} {\overline \xi}_{r_{i_1} \ldots r_{i_n}} \left( {\overline \xi}_{r_{1-i_1}} \ldots {\overline \xi}_{r_{1-i_n}} \right) }{2\left({\overline \xi}_{p_1} \ldots {\overline \xi}_{p_n} \right)};
\end{align*}
(note the bar over ${\overline \xi}_{r_{i_1} \ldots r_{i_n}}$ that is
absent in the definition of $\vartheta$ by (\ref{constant})). 

We want to prove $\varpi=0$, and to do this it suffices to prove that
the complex conjugate of the numerator is zero, i.e.
\begin{align*}
\varpi^\prime=\sum_{(i_1, \ldots i_n) \in \{0,1\}^n} \ (-1)^{\sum i_j} \xi_{r_{i_1} \ldots r_{i_n}} \left( \xi_{r_{1-i_1}} \ldots \xi_{r_{1-i_n}} \right)=0.
\end{align*}
Let $a_k=\langle q_ks_k \rangle_i$, $b_k=\langle q_k \rangle_i \langle
s_k \rangle_i$, $c_k=\langle p_ks_k \rangle_i$, $d_k=\langle p_k
\rangle_i \langle s_k \rangle_i$. Using the definition of $\xi$ in
(\ref{xi}), the above equation can be written
\begin{align*}
\varpi^\prime/(2^{n+1}(-1)^n)&=\prod_{k=1}^n \left\{ a_k(c_k-d_k)-c_k(a_k-b_k)\right\}
-\prod_{k=1}^n \left\{ b_k(c_k-d_k)-d_k(a_k-b_k)\right\}\\
&=\prod (b_kc_k-a_kd_k)-\prod (b_kc_k-a_kd_k) = 0.
\end{align*}

\end{proof}
Suppose the interaction Hamiltonian has the standard von Neumann form
$H_{int}=gpA$, so $s=p$ in the definition of $\xi$ by equation
(\ref{xi}). Then for $n=1$, since $\overline{ \xi}_p=\xi_p$ and
$\overline{\langle qp \rangle}_\ci=\langle pq \rangle_\ci$,
$\vartheta=(-i)(\xi_q-\overline {\xi}_q)=(-i)(\langle qp
\rangle_\ci-\langle pq \rangle_\ci)=1$, so we get the even simpler
result
\begin{align}\label{simple}
\langle a \rangle=\langle a \rangle_\ci+gA_w.
\end{align}
This is valid for all initial pointer wavefunctions, and therefore
extends Lundeen and Resch's equation (\ref{LR1}). It seems almost too
simple: there is no factor corresponding to $\xi$ in equation
(\ref{qmean}). However, a dependency on the initial pointer
wavefunction is of course built into the definition of $a$ through
$\eta$.

For $n>1$ it is no longer true that $\vartheta=1$, even with the
standard interaction Hamiltonian. However, if in addition $\langle p
\rangle_\ci=0$, then
\begin{align*}
\vartheta=(-i)^n\prod_{k=1}^n\left(\langle q_kp_k \rangle_\ci- \langle p_kq_k \rangle_\ci \right)=(-i)^n(i)^n=1.
\end{align*}
Thus $\langle a_1 \ldots a_n \rangle^c = g_1 \ldots g_n (A_n, \ldots
A_1)_w^c$ for all $n$. Applying the inverse operation for the
cumulant, given by Propostion \ref{anti}, we deduce:
\begin{corollary}
If $\langle p \rangle_\ci=0$, e.g. if the initial pointer wavefunction
$\phi$ is real, then for $n>1$
\begin{align}\label{nice}
\langle a_1 \ldots a_n \rangle=g_1 \ldots g_n (A_n, \ldots A_1)_w.
\end{align}
\end{corollary}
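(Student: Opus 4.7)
The plan is to deduce the corollary from Theorem \ref{lowering-theorem} by inverting the cumulant on both sides. Under the hypothesis $\langle p \rangle_\ci=0$ with the standard von Neumann Hamiltonian $H_{int}=gpA$, the paragraph preceding the corollary has already verified that $\vartheta=(-i)^n(i)^n=1$ at every order, so Theorem \ref{lowering-theorem} reduces to
\[
\langle a_1 \ldots a_n \rangle^c = g_1 \ldots g_n \, (A_n, \ldots, A_1)^c_w
\]
for all $n\ge 1$, understood in the same ``up to total order $n$ in the $g_k$'' sense as Theorem \ref{main-theorem}.

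Next I would apply Proposition \ref{anti} directly to the product on the left:
\[
\langle a_1 \ldots a_n \rangle = \sum_{b=\{b_1,\ldots,b_k\}} \prod_{j=1}^k \left\langle \prod_{b_j} a \right\rangle^c .
\]
Substituting the cumulant identity above into each factor, the coupling constants belonging to block $b_j$ contribute $\prod_{\ell\in b_j} g_\ell$, and since the blocks partition $\{1,\ldots,n\}$ these combine across $j$ into the single prefactor $g_1\ldots g_n$, leaving
\[
\langle a_1 \ldots a_n \rangle = g_1 \ldots g_n \sum_{b} \prod_{j=1}^k \left(\overleftarrow{A_{b_j}}\right)^c_w .
\]

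The last step is to recognise the right-hand sum as the inverse cumulant formula applied to sequential weak values. Because $(A_n,\ldots,A_1)^c_w$ is defined by the same polynomial (\ref{cumulant}) in its subset arguments as the random-variable cumulant, and the proof of Proposition \ref{anti} rests only on the combinatorial identity on the $a_k$'s derived from (\ref{coefficients}), it applies verbatim with $\langle\prod_{b_j}q\rangle$ replaced by $(\overleftarrow{A_{b_j}})_w$. Hence $\sum_b \prod_j (\overleftarrow{A_{b_j}})^c_w = (A_n,\ldots,A_1)_w$, which delivers equation (\ref{nice}).

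The only real bookkeeping issue is the $n=1$ anomaly in Theorem \ref{lowering-theorem}: the constant $\langle a\rangle_\ci$ of equation (\ref{1lowering}), when threaded through the singleton blocks of the partition sum, contributes terms of strictly lower order in the $g_k$ than $g_1\ldots g_n$. Since these lie outside the targeted leading $g_1\ldots g_n$ coefficient, the corollary is to be read either as the identification of that coefficient, or equivalently under the additional (and standard) assumption $\langle q\rangle_\ci=0$ that makes $\langle a\rangle_\ci$ vanish outright.
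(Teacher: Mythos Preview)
Your proposal is correct and follows essentially the same route as the paper: once $\vartheta=1$ has been established, the paper too simply invokes Proposition~\ref{anti} (the inverse cumulant) to pass from the cumulant identity $\langle a_1\ldots a_n\rangle^c=g_1\ldots g_n(A_n,\ldots,A_1)_w^c$ to equation~(\ref{nice}). Your explicit handling of the $n=1$ anomaly is in fact more careful than the paper's own one-line deduction, which glosses over the $\langle a\rangle_\ci$ contribution from singleton blocks.
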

This is the sequential weak value version of the result for
simultaneous measurements, (\ref{simul}), but is more general than the
gaussian case treated in \cite{LR05}.

We might be tempted to try to repeat the above argument for pointer
positions $q_k$ instead of the lowering operators $a_k$ by applying
the anti-cumulant to both sides of (\ref{main-result}). This fails,
however, because of the need to take the real part of the weak values;
in fact, this is one way of seeing where the extra terms come from in
(\ref{2q}), (\ref{3q}) and (\ref{4q}) and their higher analogues.

Note also that (\ref{nice}) does not hold for general $\phi$, since
then different subsets of indices may have different values of
$\vartheta$. 

\section{Discussion}
The procedure for sequential weak measurement involves coupling
pointers at several stages during the evolution of the system,
measuring the position (or some other observable) of each pointer, and
then multiplying the measured values together. In \cite{MJP07} it was
argued that we would really like to measure the product of the values
of the operators $A_1, \ldots A_n$, and that this corresponds to the
sequential weak value $(A_n, \ldots, A_1)_w$. Multiplication of the
values of pointer observables is the best we can do to achieve this
goal. However, this brings along extra terms, such as $(A_1)_w({\bar
A}_2)_w $ in (\ref{2q}), which are an artefact of this method of
extracting information. From this perspective, the cumulant extracts
the information we really want.

In \cite{MJP07}, a somewhat idealised measuring device was being
considered, where the pointer position distribution is real and has
zero mean. When the pointer distribution is allowed to be arbitrary,
the expressions for $\langle q_1 \ldots q_n \rangle$ become wildly
complicated (see for instance (\ref{horrible})). Yet the cumulant of
these terms condenses into the succinct equation (\ref{main-result})
with all the complexity hidden away in the one number $\xi$. Why does
the cumulant have this property?

Recall that the cumulant vanishes when its variables belong to two
independent sets. The product of the pointer positions $q_1, \ldots
q_n$ will include terms that come from products of disjoint subsets of
these pointer positions, and the cumulant of these terms will be sent
to zero, by Lemma \ref{indep-lemma}. For instance, with $n=2$, the
pointers are deflected in proportion to their individual weak values,
according to (\ref{firstxi}), and the cumulant subtracts this
component leaving only the component that arises from the
$O(g^2)$-influence of the weak measurement of $A_1$ on that of
$A_2$. The subtraction of this component corresponds to the
subtraction of the term $(A_1)_w({\bar A}_2)_w $ from (\ref{2q}). In
general, the cumulant of pointer positions singles out the maximal
correlation involving all the $q_i$, and the theorem tells us that
this is directly related to the corresponding ``maximal correlation''
of sequential weak values, $(A_n, \ldots, A_1)^c$, which involves all
the operators.

In fact, the theorem tells us something stronger: that it does not
matter what pointer observable $r(p,q)$ we measure, e.g. position,
momentum, or some Hermitian combination of them, and that likewise the
coupling of the pointer with the system can be via a Hamiltonian
$H_{int}=gs(p,q)A$ with any Hermitian $s(p,q)$. Different choices of
$r$ and $s$ lead only to a different multiplicative constant $\xi$ in
front of $(A_n, \ldots, A_1)_w^c$ in (\ref{main-result}). We always
extract the same function of sequential weak values, $(A_n, \ldots,
A_1)_w^c$, from the system. This argues both for the fundamental
character of sequential weak values and also for the key role played
by their cumulants.

\section{Acknowledgements}

I am indebted to J. {\AA}berg for many discussions and for comments on
drafts of this paper; I thank him particularly for putting me on the
track of cumulants. I also thank A. Botero, P. Davies, R. Jozsa,
R. Koenig and S. Popescu for helpful comments. A preliminary version
of this work was presented at a workshop on ``Weak Values and Weak
Measurement'' at Arizona State University in June 2007, under the
aegis of the Center for Fundamental Concepts in Science, directed by
P. Davies.

\appendix

\section{An explicit calculation}

To calculate $\langle q_1q_2 \rangle$ for arbitrary pointer
wavefunctions $\phi_1$ and $\phi_2$, we use (\ref{bigstate}) to
determine the state of the two pointers after the weak interaction,
and then evaluate the expectation using (\ref{expectation}), keeping
only terms up to order $g^2$. We define
\begin{align*}
\mu_k&=\langle q_k \rangle_\ci,\ \
\nu_k=\langle p_k \rangle_\ci,\ \
\zeta_k=\langle p_k^2 \rangle_\ci,\\
\rho_k&=\langle q_kp_k \rangle_\ci,\ \
\sigma_k=\langle q_kp_k^2 \rangle_\ci,\ \
\tau_k=\langle p_kq_kp_k \rangle_\ci,
\end{align*}

Then, expanding the exponential in (\ref{bigstate}) and substituting
$\Psi$ in (\ref{expectation}) gives, up to order $g^2$,
\begin{align}\label{horrible}
\langle q_1q_2 \rangle&=\mu_1\mu_2-ig_1\left\{ \left( (A_1)_w-({\bar
A}_1)_w \right)\mu_1\nu_1\mu_2- ({\bar A}_1)_w {\bar
\rho}_1\mu_2+(A_1)_w\rho_1\mu_2 \right\}\\
\nonumber &-ig_2\left\{\left((A_2)_w-({\bar
A}_2)_w\right)\mu_1\mu_2\nu_2 -({\bar A}_2)_w \mu_1{\bar
\rho}_2+(A_2)_w\mu_1\rho_2 \right\}\\
\nonumber &+g_1^2 \left\{|(A_1)_w|^2(\tau_1\mu_2-\mu_1\zeta_1\mu_2)
+\left( (A_1^2)_w+ ({\bar A}_1^2)_w \right)\frac{\mu_1\zeta_1\mu_2}{2}
\right\}\\
\nonumber &-g_1^2\left\{\left( (A_1)_w-({\bar A}_1)_w
\right)^2\mu_1\nu_1^2\mu_2 +(A_1^2)_w\frac{\sigma_1\mu_2}{2}+({\bar
A}_1^2)_w\frac{{\bar \sigma_1}\mu_2}{2} \right\}\\
\nonumber &+g_2^2 \left\{ |(A_2)_w|^2(\mu_1\tau_2-\mu_1\mu_2\zeta_2)
+\left( (A_2^2)_w+ ({\bar A}_2^2)_w
\right)\frac{\mu_1\mu_2\zeta_2}{2}\right \}\\ 
\nonumber &-g_2^2 \left\{\left( (A_2)_w-({\bar A}_w)_2
\right)^2\mu_1\mu_2\nu_2^2+(A_2^2)_w \frac{\mu_1\sigma_2}{2}+({\bar
A}_2^2)_w\frac{\mu_1{\bar \sigma_2}}{2}\right \}\\
\nonumber &+g_1g_2 \left\{(A_1)_w({\bar A}_2)_w \rho_1 {\bar
\rho_2}+({\bar A}_1)_w(A_2)_w {\bar \rho_1}\rho_2
-(A_2,A_1)_w\rho_1\rho_2-\overline{(A_2,A_1)}_w{ \bar\rho}_1{\bar
\rho}_2\right \}\\
\nonumber &-g_1g_2 \left\{ 2\left( (A_1)_w-({\bar A}_1)_w
 \right)\left( (A_2)_w-({\bar A}_2)_w \right)\mu_1\nu_1\mu_2\nu_2
 \right\}\\
\nonumber &+g_1g_2 \left\{ \left( (A_2,A_1)_w+\overline{(A_2,A_1)}_w
-(A_1)_w({\bar A}_2)_w-({\bar A}_1)_w(A_2)_w
\right)\mu_1\nu_1\mu_2\nu_2 \right \}\\ 
\nonumber &+g_1^2 \left\{ \left( (A_1)_w-({\bar A}_1)_w\right)(A_1)_w
\nu_1\rho_1\mu_2-\left( (A_1)_w-({\bar A}_1)_w\right) ({\bar
A_1})_w\nu_1{\bar \rho}_1\mu_2 \right \} \\
\nonumber &+g_2^2 \left\{ \left( (A_2)_w-({\bar A}_2)_w\right)(A_2)_w
\mu_1\nu_2\rho_2-\left( (A_2)_w-({\bar A}_2)_w\right)({\bar A}_2)_w
\mu_1\nu_2{\bar \rho_2} \right \} \\
\nonumber &+g_1g_2 \left\{ \left( (A_1)_w-({\bar
A}_1)_w\right)(A_2)_w\mu_1\nu_1\rho_2-\left( (A_1)_w-({\bar
A}_1)_w\right)({\bar A}_2)_w\mu_1\nu_1{\bar \rho_2} \right \}\\
\nonumber &+g_1g_2 \left\{ \left( (A_2)_w-({\bar
A}_2)_w\right)(A_1)_w\rho_1\mu_2\nu_2-\left( (A_2)_w-({\bar
A}_2)_w\right)({\bar A}_1)_w{\bar \rho_1}\mu_2\nu_2 \right \}.
\end{align}

To calculate the cumulant $\langle q_1,q_2 \rangle^c=\langle q_1q_2
\rangle-\langle q_1 \rangle\langle q_2 \rangle$ we need $\langle q
\rangle$ up to order $g^2$:
\begin{align}\label{singleq}
\langle q \rangle&=\mu+ig\left\{A_w(\mu\nu-\rho)-{\bar
A}_w(\mu\nu-{\bar \rho})\right\}+g^2
|A_w|^2\left(\tau-\mu\zeta+2\mu\nu^2-\nu\rho-\nu{\bar \rho}\right)\\
\nonumber &+g^2\left\{ (A^2)_w\left(\frac{\mu\zeta}{2}-\frac{\sigma}{2}\right)-({\bar
A}^2)_w\left(\frac{\mu\zeta}{2}-\frac{\bar \sigma}{2}\right)
+(A_w)^2(\nu\rho-\mu\nu^2)+({\bar A}_w)^2(\nu{\bar \rho}-\mu\nu^2)
\right \}.
\end{align}

Substituting from (\ref{horrible}) and (\ref{singleq}) a radical
simplification occurs:
\begin{align}\label{explicit}
\langle q_1q_2 \rangle^c=g_1g_2 \left\{ (A_2,A_1)_w-(A_1)_w(A_2)_w \right\}\left(\mu_1\nu_1\mu_2\nu_2-\rho_1\rho_2\right) + \mbox{complex conjugate}.
\end{align}
This, of course, is what Theorem \ref{main-theorem} tells us.

\end{document}